\documentclass[12pt, a4paper]{article}

\usepackage[utf8]{inputenc}
\usepackage{geometry}
\usepackage{amsmath, amssymb, amsthm}
\usepackage[authoryear]{natbib}
\usepackage{booktabs}
\usepackage{xcolor}
\usepackage[hidelinks]{hyperref}
\usepackage{setspace}
\usepackage[ruled, noline]{algorithm2e}

\SetKwComment{Comment}{/* }{ */}

\usepackage{array}
\usepackage{siunitx}
\theoremstyle{plain}
\newtheorem{theorem}{Theorem}
\newtheorem{corollary}{Corollary}
\newtheorem{lemma}{Lemma}
\theoremstyle{definition}

\newtheorem{assumption}{Assumption}
\theoremstyle{remark}
\newtheorem{remark}{Remark}

\newcommand{\DX}{\Delta X}

\newcommand{\Rsim}{R}
\newcommand{\Kmax}{K_{\max}}

\title{SAUSS: Stochastic Approximation with Unbiased Simulated Scores for
Limited Dependent Variable Models}
\author{
Sokbae Lee\thanks{Department of Economics, Columbia University, New York, NY 10027, USA.}
\and
Yuan Liao\thanks{Department of Economics, University of Iowa, Iowa City, IA 52242, USA.}
\and
Myung Hwan Seo\thanks{Department of Economics, The Hong Kong University of Science and Technology, Hong Kong.}
\and
Youngki Shin\thanks{Department of Economics, McMaster University, Hamilton, ON L8S 4L8, Canada.}
}
\date{August 25, 2026}

\begin{document}

\maketitle

\begin{abstract}
Multinomial choice models allow flexible substitution patterns but become computationally demanding with many alternatives or observations. With a fixed per-observation simulation budget, simulated maximum likelihood introduces simulation bias, while each optimization step requires a full-sample likelihood evaluation. We propose Stochastic Approximation with Unbiased Simulated Scores (SAUSS), an averaged stochastic approximation based on conditionally unbiased mini-batch score estimates. Each iteration uses a fixed mini-batch regardless of sample size. For multinomial probit, accept--reject sampling provides exact conditional draws and unbiased score estimates for any fixed number of accepted draws. Under local conditions, asymptotic theory for the averaged estimator and the partial-sum process of the SAUSS iterates incorporates mini-batch and simulation variability and supports random-scaling and plug-in inference. In simulations and an application, SAUSS gives comparable results in less than 1\% of the computation time of simulated maximum likelihood. SAUSS extends to limited dependent variable models with conditional-expectation score representations and exact conditional sampling.
\end{abstract}

\noindent\textbf{Keywords:}
Accept--reject simulation; limited dependent variable models; multinomial
probit; simulated scores; stochastic gradient descent.

\onehalfspacing

\section{Introduction}\label{sec:introduction}

{Statistical modeling of choices among multiple alternatives across a variety
of fields requires balancing flexible dependence structures against
computational tractability.} The multinomial logit
model remains a workhorse in part because its choice probabilities are available
in closed form \citep{mcfadden1974conditional,Train2009}. Its
independence-of-irrelevant-alternatives structure, however, restricts how
substitution across alternatives can respond to unobserved heterogeneity.
Multinomial probit (MNP) allows the latent utility shocks to have a flexible
covariance structure and thereby accommodates richer substitution patterns. The
price of this flexibility is computation: each likelihood contribution is a
multivariate normal region probability, with dimension increasing in the number
of alternatives. This integration problem has long limited routine likelihood
estimation of richly parameterized MNP models
\citep[e.g.,][]{gewekeKeaneRunkle1994alternative,Train2009}.

The computational burden has two distinct dimensions. As the number of
alternatives \(J\) grows, {each observation requires a
higher-dimensional region probability and, after scale normalization, a
flexible covariance specification contains a growing number of parameters.} As
the sample size \(n\) grows, conventional likelihood optimization repeatedly
evaluates these costly contributions for the full sample. {Conventional
full-sample simulated maximum likelihood mitigates the first problem by
replacing each probability with a smooth simulator, most commonly one based on the}
Geweke--Hajivassiliou--Keane (GHK) construction
\citep[e.g.,][]{borschSupanHajivassiliou1993,hajivassiliou1996simulation}.
{It does not by itself address the second problem created by a
large sample.} Moreover, placing an estimated probability inside a
logarithm creates finite-simulation bias even when the probability simulator
itself is unbiased. {Under conventional simulated-likelihood
asymptotics, the number of draws per observation must therefore increase with
\(n\), while every optimization step continues to process all \(n\)
observations.} Simulation
effort per contribution to the log likelihood and the number of contributions per step can thus grow
together.

{Classical simulation-based work addresses this integration
problem in several ways. \citet{mcfadden1989method} introduces the method of
simulated moments for discrete-response models, avoiding direct numerical
integration of choice probabilities. \citet{keane1994computational} develops a
computationally practical extension for panel limited dependent variable (LDV)
models by factorizing moment conditions in terms of simulated transition
probabilities. \citet{gewekeKeaneRunkle1994alternative} compare simulated
maximum likelihood, simulated moments, and Bayesian data augmentation directly
for MNP. Together with the development and evaluation of GHK and related
multivariate-normal simulators
\citep{borschSupanHajivassiliou1993,hajivassiliou1996simulation}, these
contributions make simulation-based estimation of flexible probit models
practical at moderate sample sizes but leave a basic large-sample tension:
accurate simulation and repeated full-sample computation remain costly as
\(n\) grows.}

{The closest antecedent to our approach is the method of simulated
scores (MSS) of \citet{hajivassiliou1998method}. Rather than inserting an
estimated probability into a criterion, MSS represents the likelihood score as
a conditional expectation of a complete-data score. Exact conditional draws
obtained by accept--reject simulation yield an unbiased simulated score for any
fixed number of accepted draws. MSS also develops continuous score simulators
based on recursive conditioning and Gibbs resampling. The unbiased
accept--reject score simulator is generally
discontinuous in the parameter, whereas the continuous constructions may
require increasing simulation effort to eliminate approximation bias.
{A finite Gibbs chain provides an exact conditional draw if
initialized in stationarity; from a nonstationary initialization, its
distribution converges to the target as the chain length tends to infinity.}
{Because MSS is formulated as a full-sample simulated-score
estimating equation, conventional numerical solution methods face a tradeoff
between unbiased score simulation and smoothness.}}

{This paper develops SAUSS---Stochastic Approximation with Unbiased
Simulated Scores---to sidestep that tradeoff. {SAUSS uses exact
conditional accept--reject draws to construct an unbiased, potentially
discontinuous score estimate inside a stochastic approximation recursion.}
This supplies a
conditionally unbiased update direction with suitable moment properties
without requiring each realization of the simulator to be smooth in the
parameter. At each iteration, SAUSS draws a mini-batch of \(m\) observations,
where \(m\) can be much smaller than \(n\) and may equal one, generates fresh
conditional score draws for those observations, and updates the parameter using
the resulting average negative score. The proposed estimator averages the
iterates in the manner of \citet{polyak1992acceleration}. With \(m\) fixed, the
number of observations processed per update does not grow with \(n\). SAUSS
thus retains the exact likelihood score as its target while avoiding both the
logarithm of a simulated probability and repeated full-sample updates.}

{Relative to existing methods, SAUSS makes three contributions.
First, it embeds the exact conditional simulation and unbiased score
construction of MSS in mini-batch stochastic approximation with iterate averaging.
{Second, it gives an explicit MNP implementation with a
scale-normalized, otherwise unrestricted covariance parameterization and
evaluates its computational performance as the number of alternatives grows.}
Third, it develops inference that jointly accounts for mini-batch and
score-simulation variation. The framework extends to other LDV models admitting
conditional-expectation score representations and exact conditional
simulation.}

{Under the reported configurations, SAUSS delivers results broadly
comparable to GHK-based simulated maximum likelihood, while GHK requires more
than 100 times as much computation time in both the Monte Carlo benchmark and
the empirical application. With the same tuning constants, SAUSS remains computationally feasible
as the choice set expands far beyond the benchmark design.
{Although the reported runtime ratios depend on the simulation
budgets and implementations, their consistent pattern illustrates the
practical value of the SAUSS architecture.}}
{The statistical theory treats the variation created by that
architecture as part of the inferential problem. Under local regularity and
initialization conditions, we establish a linear representation and functional
central limit theorem for the partial-sum path in the following
\(\epsilon\)-sense: for any \(\epsilon>0\), the learning-rate constant can be
chosen so that the recursion remains in the local neighborhood with probability
at least \(1-\epsilon-o(1)\), and the conclusions hold on that event. The
endpoint yields, in the same \(\epsilon\)-sense, the corresponding normal
approximation for the averaged estimator. The limit separates variation
generated by mini-batch sampling from variation generated by score simulation
and supports random-scaling and plug-in procedures that account for both
sources of algorithmic variation.}
{The closest contemporaneous work is
\citet{frazierLoaizaMayaNibbering2026scalable}, who propose Stochastically
Estimated Gradient Ascent (SEGA). Both SEGA and SAUSS combine conditional-score
simulation and iterate averaging. Drawing on two
simulation routes already developed in MSS, SEGA uses Gibbs-based score
simulation in \emph{full-sample} updates, whereas SAUSS uses score simulation
based on exact accept--reject conditional draws in \emph{mini-batch} updates.
SEGA's theory assumes conditional unbiasedness, which is satisfied by exact
conditional draws or by a Gibbs chain initialized in stationarity. The SAUSS
inferential limit retains variation from mini-batch sampling and score
simulation. Under SEGA's
regularity conditions and an iteration schedule that makes optimization error
negligible relative to \(n^{-1/2}\) sampling error, its averaged estimator is
first-order asymptotically equivalent to the infeasible maximum likelihood estimator (MLE) based on the exact
likelihood and therefore inherits its limiting distribution. The two approaches
thus use the same conditional-score lineage from MSS but adopt different computational
regimes and asymptotic approximations.}

{A parallel Bayesian literature avoids direct likelihood evaluation
through latent-data augmentation. \citet{albertChib1993bayesian} provide the
foundational framework for binary and polychotomous probit models. For MNP,
\citet{mccullochRossi1994exact} use latent Gaussian utilities in a Gibbs
posterior simulator, and \citet{imaiVanDyk2005bayesian} develop marginal data
augmentation. Recent work develops scalable Bayesian estimation using
structured covariance models \citep{loaizaMayaNibbering2022scalable} and fast
variational approximations \citep{loaizaMayaNibbering2023fast}. These methods
target posterior inference, whereas SAUSS targets the exact likelihood score
and delivers frequentist inference for the averaged stochastic approximation.}

{Other recent approaches modify the likelihood approximation or
the computation of choice probabilities, including approximate composite
marginal likelihood \citep{bhat2011macml}, expectation propagation within an EM
algorithm \citep{dingImbensQuYe2024probit}, and a neural-network emulator for
correlated discrete-choice probabilities \citep{huchKeane2026amortized}.
SAUSS instead retains the exact likelihood score as its target and changes the
computational and inferential architecture in which unbiased simulated-score
contributions are subsampled across observations, accumulated over iterations,
and used for inference.}

{Finally, SAUSS connects the classical stochastic approximation
recursion of \citet{robbinsMonro1951stochastic} and iterate averaging of
\citet{polyak1992acceleration} to simulation-based likelihood estimation.
Recent stochastic-approximation work provides scalable estimation and inference
for smooth-loss models, with applications to linear mean and logistic regression
\citep{leeLiaoSeoShin2022randomScaling}; for large-scale linear quantile
regression \citep{leeLiaoSeoShin2025quantile}; for semiparametric models
\citep{chenTamerYao2026fast}; and for overidentified moment condition models through
SGMM \citep{chenLeeLiaoSeoShinSong2025sgmm} and SLIM
\citep{chenKimLeeSeoSong2025slim}. In SAUSS, however, each update contains
additional conditional-simulation variation because the exact likelihood score
is not directly available. The functional limit and variance procedures
explicitly accommodate this additional source of variation without smoothing
the accept--reject mechanism.}

The rest of the paper is organized as follows.
Section~\ref{sec:sauss-framework} presents the SAUSS framework, and
Sections~\ref{sec:ldv-identity} and~\ref{sec:mnp-sauss} develop the LDV score
identity and its MNP implementation. Sections~\ref{sec:asymptotic-theory}
and~\ref{sec:inference} establish the asymptotic theory and inference procedures.
Sections~\ref{sec:monte-carlo} and~\ref{sec:application} report the Monte Carlo
evidence and empirical application, and
Section~\ref{sec:conclusion} concludes.
Proofs of the asymptotic results and the supporting lemmas are collected in
Appendix~\ref{app:proof-roadmap}.

\section{The SAUSS Framework}\label{sec:sauss-framework}

{We propose estimation and inference using stochastic gradient descent, which recursively updates the estimator as follows:}
\[
\theta_t= \theta_{t-1} - \gamma_t \hat G_t(\theta_{t-1})
\]
with a suitably chosen learning rate $\gamma_t$. Here, {$\hat G_t(\theta_{t-1})$ is the negative of a simulated likelihood score.} Unlike the usual stochastic gradient descent inference framework, directly evaluating the score from the log-likelihood is a challenging task in limited dependent variable models.
This section presents the conceptual framework for SAUSS. It first
reviews why smooth simulated likelihoods are attractive but biased at a
fixed simulation budget, then explains how MSS supplies unbiased
simulated score contributions, and finally sets out the resulting
stochastic approximation recursion in descent form.

\subsection{Smooth Simulated Likelihood and Bias}

Let \(P_i(\theta)\) denote the probability of the observed outcome for
observation \(i\), and let
\[
    \mathcal{L}_n(\theta)
    =
    \sum_{i=1}^n \log P_i(\theta)
\]
be the exact log-likelihood. Simulated maximum likelihood replaces
\(P_i(\theta)\) with a simulated probability estimator
\(\hat P_i(\theta)\) and maximizes
\[
    \hat{\mathcal{L}}_n(\theta)
    =
    \sum_{i=1}^n \log \hat P_i(\theta).
\]
{GHK-based simulated maximum likelihood is a prominent
implementation of this strategy for MNP \citep[e.g.,][]{bolduc1999practical}.}
Much of the classical simulation literature therefore places heavy
weight on simulators that are continuous, differentiable, bounded away
from zero, and suitable for full-sample numerical optimization
\citep[e.g.,][]{Train2009}. Smoothness is valuable because the simulated
object is treated as an objective function to be maximized.

The difficulty is that the smooth simulated objective is generally not
an unbiased version of the exact log-likelihood. Even when
\(\hat P_i(\theta)\) is unbiased for \(P_i(\theta)\), the nonlinear
transformation \(\log \hat P_i(\theta)\) is generally biased. A related
problem arises for the simulated score. If
\(\hat P_i(\theta;\mathbf{u})\) denotes a simulated probability computed
from random draws \(\mathbf{u}\), then, for a fixed finite simulation
budget,
\begin{equation}\label{eq:sml-score-bias}
    \mathbb{E}_{\mathbf{u}}
    \left[
        \nabla_\theta \log \hat P_i(\theta;\mathbf{u})
    \right]
    \neq
    \nabla_\theta \log P_i(\theta)
    \qquad
    \text{in general},
\end{equation}
whenever the derivative and expectation are well defined. Here
\(\mathbb{E}_{\mathbf{u}}\) denotes expectation over the simulation
draws conditional on the observation.

In full-sample simulated maximum likelihood, this approximation error
is controlled by increasing the simulation budget as the sample size
grows. That strategy is less attractive for stochastic approximation
because a growing simulation budget also raises the cost of every
mini-batch evaluation. The alternative pursued here is to simulate an
unbiased likelihood-score contribution directly.
The number of simulated score draws used in each evaluation can then
remain fixed. In accept--reject implementations, the number of proposals
required to obtain those draws may nevertheless be random.

\subsection{Unbiased Simulated Scores}

The method of simulated scores (MSS) proposed by
\citet{hajivassiliou1998method} changes the object being simulated.
Instead of simulating \(P_i(\theta)\) and then taking a logarithm, it
simulates the score
\[
    s_i(\theta)
    =
    \nabla_\theta \log P_i(\theta)
\]
directly.
When the simulator supplies exact conditional draws, or otherwise
provides an unbiased score construction, the resulting estimate can
satisfy
\[
    \mathbb{E}_{\mathbf{u}}
    \{\hat s_i(\theta)\mid D_i\}
    =
    s_i(\theta).
\]
In \citet{hajivassiliou1998method}, the MSS estimator is formulated as a
full-sample Z-estimator based on simulated scores, that is, as a solution
to
\begin{align}\label{def:mss-z}
    n^{-1} \sum_{i=1}^n \hat s_i(\theta) = 0.
\end{align}
This root-finding problem can be difficult when
\(\theta\mapsto\hat s_i(\theta)\) is discontinuous, as often occurs with
accept--reject constructions.

SAUSS retains the unbiased simulated score but changes how it is used.
At each iteration, the negative simulated score provides a conditionally
unbiased stochastic direction for the gradient of the negative
log-likelihood. This direction need not itself be the derivative of a
smooth simulated objective.

\subsection{SAUSS as Stochastic Approximation}\label{sec:sgd}

We write the recursion in descent form for the negative log-likelihood.
Given the observed sample \(D_{1:n}\), define the individual loss
\(\ell_i(\theta)=-\log P_i(\theta)\) and the empirical criterion
\(Q_n(\theta)=n^{-1}\sum_{i=1}^n\ell_i(\theta)\). Let
\(\hat\theta_n\) denote a solution of the sample score equation
\(n^{-1}\sum_{i=1}^n s_i(\hat\theta_n)=0\).

Let \(\psi_i(\theta,\mathbf{u})\) be a one-draw simulated score satisfying
\(\mathbb{E}_{\mathbf{u}}\{\psi_i(\theta,\mathbf{u})\mid D_i\}
=s_i(\theta)\), and write \(g_i(\theta,\mathbf{u})
=-\psi_i(\theta,\mathbf{u})\). Thus \(g_i\) is a conditionally unbiased
stochastic direction for the individual loss gradient
\(\nabla_\theta\ell_i(\theta)=-s_i(\theta)\).

Set \(\theta_1=\theta_{\mathrm{init}}\). For iterations
\(t=2,\ldots,T\), draw
ordered mini-batch indices \(I_{t1},\ldots,I_{tm}\) independently and
uniformly from \(\{1,\ldots,n\}\), and generate the random
inputs \(\mathbf{u}_{tbr}\), \(r=1,\ldots,\Rsim\), independently for
every batch position, score draw, and iteration. The mini-batch
direction and SAUSS update are
\begin{equation}\label{eq:sa-update}
\begin{aligned}
    \hat G_t(\theta_{t-1})
    &=
    \frac{1}{m\Rsim}
    \sum_{b=1}^{m}\sum_{r=1}^{\Rsim}
        g_{I_{tb}}(\theta_{t-1},\mathbf{u}_{tbr}), \\
    \theta_t
    &=
    \theta_{t-1}
    -\gamma_t\hat G_t(\theta_{t-1}).
\end{aligned}
\end{equation}
Here \(\gamma_t\) is a decreasing step size. If \(\mathcal F_{t-1}\)
denotes the information available
immediately before the draws at iteration \(t\), then
\(\mathbb{E}\{\hat G_t(\theta_{t-1})\mid
D_{1:n},\mathcal F_{t-1}\}
=\nabla_\theta Q_n(\theta_{t-1})\). Conditional on the sample, the ideal
recursion therefore targets \(\hat\theta_n\).

The reported estimator is the Polyak--Ruppert average
\begin{equation}\label{eq:pr-average}
    \bar\theta_T
    =
    \frac{1}{T-T_0}
    \sum_{t=T_0+1}^{T}
        \theta_t,
\end{equation}
where \(T_0\) is the averaging start index. Thus \(T\) is the final
iterate index, the run contains \(T\) stored iterates including the
initial estimator, and the recursion performs \(T-1\) stochastic
updates.

The key statistical implication is that \(\Rsim\) need not diverge in
order to eliminate simulation bias from the ideal recursion: for every
fixed \(\Rsim\geq1\), the stochastic direction remains conditionally
unbiased. Holding \(\Rsim\) fixed does not, however, eliminate simulation
variance. The magnitude of the resulting algorithmic uncertainty
depends jointly on \(T\), \(m\), and \(\Rsim\).

\begin{remark}[Accepted draws and computational cost]
Fixing \(\Rsim\) fixes the number of accepted score draws, not the number
of proposals. For an accept--reject implementation with acceptance
probability \(a_i(\theta)\), the expected number of proposals required
to obtain \(\Rsim\) accepted draws is \(\Rsim/a_i(\theta)\). Thus the
exact recursion has random computational cost even when \(\Rsim\) is
fixed.
\end{remark}

The next section states the LDV score identity that delivers the
unbiased score contributions.

\section{The LDV Score Identity}\label{sec:ldv-identity}

This section states the identity used in
\citet{hajivassiliou1998method}. The key point is to represent the
observed outcome as a fixed latent-region event, so that differentiating
the log probability produces a conditional expectation of a complete-data
score.

Write the observed data as \(D_i=(Y_i,X_i)\), where \(Y_i\) is the
limited dependent variable outcome and \(X_i\) collects the observed
conditioning variables. Conditional on \(X_i\), suppose that \(Y_i\) is
generated by an underlying latent vector \(Y_i^*\). For the realized
outcome \(Y_i=y_i\), choose latent coordinates in which
\(\{Y_i=y_i\}\) is equivalent to the fixed-region event
\(Y_i^*\in\mathcal{D}_i\), where
\(\mathcal{D}_i=\mathcal{D}(y_i,X_i)\). The region may depend on the
observed outcome and conditioning variables, but it does not vary with
\(\theta\). All probabilities and densities below are conditional on
\(X_i\), with this conditioning suppressed to simplify notation. The
parameter enters through the conditional latent density
\(f_i(y;\theta)\), and the likelihood contribution is
\begin{equation}\label{eq:ldv-prob}
    P_i(\theta)
    =
    \int_{\mathcal{D}_i}
        f_i(y;\theta)
    \,dy.
\end{equation}

\begin{lemma}[Hajivassiliou--McFadden identity]\label{lem:hm-identity}
Fix \(i\), condition on \(X_i\), and fix \(\theta\) in the interior of
the parameter space.
Assume that the observed outcome \(Y_i=y_i\) is equivalent to
\(Y_i^*\in\mathcal{D}_i\) for a measurable set \(\mathcal{D}_i\) that
does not depend on \(\theta\). Assume also that \(P_i(\theta)>0\) and
that \(f_i(y;\theta)>0\) for almost every \(y\in\mathcal{D}_i\).
Finally, suppose that there is a neighborhood \(\mathcal{N}_\theta\) of
\(\theta\) such that, for almost every \(y\in\mathcal{D}_i\), the map
\(\vartheta\mapsto f_i(y;\vartheta)\) is differentiable on
\(\mathcal{N}_\theta\), and there exists an integrable envelope \(M_i\)
such that
\[
    \sup_{\vartheta\in\mathcal{N}_\theta}
    \left\|
        \nabla_\vartheta f_i(y;\vartheta)
    \right\|
    \leq
    M_i(y),
    \qquad
    \int_{\mathcal{D}_i} M_i(y)\,dy<\infty .
\]
Then
\begin{equation}\label{eq:ldv-score-identity}
    \nabla_\theta \log P_i(\theta)
    =
    \mathbb{E}_\theta
    \left[
        \nabla_\theta \log f_i(Y_i^*;\theta)
        \mid
        Y_i^* \in \mathcal{D}_i
    \right].
\end{equation}
\end{lemma}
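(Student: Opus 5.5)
The plan is to differentiate under the integral sign in \eqref{eq:ldv-prob} and then rewrite the resulting integral as a conditional expectation. The first step is to show that \(P_i\) is differentiable at \(\theta\) with
\[
    \nabla_\theta P_i(\theta)=\int_{\mathcal{D}_i}\nabla_\theta f_i(y;\theta)\,dy .
\]
I will prove this coordinatewise. Fix a direction \(e_k\) and small \(h\) with \(\theta+he_k\in\mathcal{N}_\theta\); shrinking \(\mathcal{N}_\theta\) if needed, we may take it to be a ball, hence convex. For almost every \(y\in\mathcal{D}_i\), the mean value theorem applied to \(h\mapsto f_i(y;\theta+he_k)\) bounds the difference quotient \(\{f_i(y;\theta+he_k)-f_i(y;\theta)\}/h\) in absolute value by \(\sup_{\vartheta\in\mathcal{N}_\theta}\|\nabla_\vartheta f_i(y;\vartheta)\|\le M_i(y)\). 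Since \(M_i\) is integrable on \(\mathcal{D}_i\), dominated convergence yields the partial derivatives. Because \(\mathcal{D}_i\) does not depend on \(\theta\), no boundary terms appear; this is precisely where the fixed-region representation is used.

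A technical point is that the lemma assumes only differentiability of \(f_i(y;\cdot)\), not continuity of the gradient. Dominated convergence therefore gives existence of each partial derivative, and full (Fréchet) differentiability of \(P_i\) needs a little more care. I would handle it by applying the same mean-value and dominated-convergence argument along arbitrary sequences \(h_j\to0\) in \(\mathbb{R}^p\). For almost every \(y\), the remainder \(\{f_i(y;\theta+h_j)-f_i(y;\theta)-\nabla f_i(y;\theta)^\top h_j\}/\|h_j\|\) tends to zero by pointwise differentiability. It is also bounded by \(2M_i(y)\), using the mean value inequality along the segment inside the ball. Dominated convergence then gives Fréchet differentiability of \(P_i\). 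I expect this step to be the main, though mild, obstacle; everything else is algebra.

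The second step uses \(P_i(\theta)>0\) and the chain rule to write
\[
    \nabla_\theta\log P_i(\theta)=\int_{\mathcal{D}_i}\frac{\nabla_\theta f_i(y;\theta)}{P_i(\theta)}\,dy .
\]
Since \(f_i(y;\theta)>0\) almost everywhere on \(\mathcal{D}_i\), we have \(\nabla_\theta f_i=f_i\nabla_\theta\log f_i\) there, and the integrand becomes \(\nabla_\theta\log f_i(y;\theta)\cdot f_i(y;\theta)\mathbf{1}\{y\in\mathcal{D}_i\}/P_i(\theta)\).

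The last factor is exactly the conditional density of \(Y_i^*\) given \(Y_i^*\in\mathcal{D}_i\) under \(\theta\), which gives \eqref{eq:ldv-score-identity}. The conditional expectation is well defined, that is, \(\nabla_\theta\log f_i(Y_i^*;\theta)\) is conditionally integrable, because its conditional absolute moment equals \(P_i(\theta)^{-1}\int_{\mathcal{D}_i}\|\nabla_\theta f_i\|\,dy\le P_i(\theta)^{-1}\int_{\mathcal{D}_i}M_i\,dy<\infty\).
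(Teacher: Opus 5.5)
Your proposal is correct and follows the paper's route. You differentiate \(P_i\) under the integral over the fixed region \(\mathcal{D}_i\), write \(\nabla_\theta f_i=f_i\nabla_\theta\log f_i\), divide by \(P_i(\theta)>0\), and recognize \(f_i(y;\theta)/P_i(\theta)\) on \(\mathcal{D}_i\) as the truncated density. Beyond that, you make explicit, and correctly, what the paper compresses into ``by the stated assumptions'': the mean-value and dominated-convergence justification via the envelope \(M_i\), full differentiability of \(P_i\), and conditional integrability of the complete-data score.
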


\begin{proof}
By the stated assumptions,
\begin{align}
    \nabla_\theta P_i(\theta)
    &=
    \int_{\mathcal{D}_i}
        \nabla_\theta f_i(y;\theta)
    \,dy \notag \\
    &=
    \int_{\mathcal{D}_i}
        \left[
            \nabla_\theta \log f_i(y;\theta)
        \right]
        f_i(y;\theta)
    \,dy.
    \label{eq:ldv-score-derivation}
\end{align}
Dividing \eqref{eq:ldv-score-derivation} by \(P_i(\theta)\) gives
\[
    \nabla_\theta \log P_i(\theta)
    =
    \int_{\mathcal{D}_i}
        \nabla_\theta \log f_i(y;\theta)
        \frac{f_i(y;\theta)}{P_i(\theta)}
    \,dy,
\]
which is \eqref{eq:ldv-score-identity} because \(f_i(y;\theta)/P_i(\theta)\) is the conditional density of \(Y_i^*\) given \(Y_i^*\in\mathcal{D}_i\).
\end{proof}

\begin{remark}[Binary probit and fixed regions]
For binary probit, let
\(Y_i^*=X_i^\top\beta+\varepsilon_i\), with
\(\varepsilon_i\sim\mathcal{N}(0,1)\), and let
\(Y_i=\mathbb{I}\{Y_i^*>0\}\). In latent-utility coordinates, the
observed outcome corresponds to the fixed region \((0,\infty)\) when
\(Y_i=1\) and to \((-\infty,0]\) when \(Y_i=0\). In error coordinates,
the same event has a boundary that depends on \(\beta\). The identity is
therefore applied in latent-utility coordinates, where the region is
fixed and the parameter enters through the density.
\end{remark}

Lemma~\ref{lem:hm-identity} identifies the simulation target. If we can
draw independent exact realizations \(Y_{ir}^*(\theta)\) from the
conditional distribution of \(Y_i^*\) given
\(Y_i^* \in \mathcal{D}_i\), then
\begin{equation}\label{eq:ldv-unbiased-score}
    \hat s_i(\theta)
    =
    \frac{1}{\Rsim}
    \sum_{r=1}^{\Rsim}
        \left.
        \nabla_\theta \log f_i(y;\theta)
        \right|_{y=Y_{ir}^*(\theta)}
\end{equation}
is unbiased for \(s_i(\theta)\) for any fixed \(\Rsim \geq 1\).
For an accept--reject simulator, when the same underlying proposal
sequence is used across parameter values, the simulated-score map may be
discontinuous in \(\theta\) because the identity of the accepted
proposal can change discretely as the parameter moves.
For full-sample numerical root finding, this discontinuity can create a
difficult computational problem. For stochastic approximation,
conditional unbiasedness is the central property, together with the
moment and stability conditions stated later.

The identity applies directly to the region-probability contributions in
canonical LDV settings emphasized by
\citet{hajivassiliou1998method}, including panel probit and multinomial
choice models with correlated latent errors. In Tobit-type models, it
applies to censored contributions, while uncensored density
contributions can be scored directly.

We build on the Hajivassiliou--McFadden identity and focus on LDV models
in which the score admits a conditional-expectation representation and
for which either an exact unbiased score simulator is available or a
practical simulator is available whose approximation error can be
diagnosed. Section~\ref{sec:mnp-sauss} verifies this construction
explicitly for MNP under the covariance normalization used by the
algorithm.

\section{SAUSS for Multinomial Probit}\label{sec:mnp-sauss}

This section specializes the LDV score identity to MNP and
describes how SAUSS can be implemented for MNP.

\subsection{Model and Choice Regions}

Consider \(n\) agents choosing among \(J\) alternatives. The latent
utility for agent \(i\) and alternative \(j\) is
\begin{equation}\label{eq:mnp-utility}
    U_{ij} = x_{ij}^\top \beta + \varepsilon_{ij},
    \quad j=1,\ldots,J,
\end{equation}
where \(x_{ij} \in \mathbb{R}^K\), \(\beta \in \mathbb{R}^K\), and
\(\varepsilon_i:=(\varepsilon_{i1},\ldots,\varepsilon_{iJ})^\top\)
satisfies \(\varepsilon_i\sim\mathcal{N}(0,\Sigma)\). Agent \(i\)
chooses \(y_i=j\) if \(U_{ij}>U_{ik}\) for all \(k \neq j\).

Utility levels are not identified, so we work in differences relative
to alternative 1. For \(j=2,\ldots,J\), define
\begin{equation}\label{eq:mnp-diff}
    \Delta U_{ij}
    :=
    U_{ij}-U_{i1}
    =
    (x_{ij}-x_{i1})^\top\beta+\nu_{ij},
    \qquad
    \nu_{ij}:=\varepsilon_{ij}-\varepsilon_{i1}.
\end{equation}
Let \(d:=J-1\),
\(\Delta U_i:=(\Delta U_{i2},\ldots,\Delta U_{iJ})^\top\),
\(\nu_i:=(\nu_{i2},\ldots,\nu_{iJ})^\top\),
and let \(\DX_i\) be the \(d \times K\) matrix whose
 \((j-1)\)th row, corresponding to alternative \(j\),
is \((x_{ij}-x_{i1})^\top\). Then
\[
    \Delta U_i = \DX_i\beta+\nu_i,
    \qquad
    \nu_i \sim \mathcal{N}(0,\Omega).
\]

The covariance matrix of differenced errors satisfies
\[
    \Omega_{j-1,k-1}
    =
    \Sigma_{jk}
    +
    \Sigma_{11}
    -
    \Sigma_{j1}
    -
    \Sigma_{1k},
    \qquad
    j,k=2,\ldots,J.
\]
Thus the first coordinate of \(\nu_i\) is \(\nu_{i2}\), and the first
row and column of \(\Omega\) correspond to alternative 2 relative to
the base alternative.

The base alternative is a labeling convention. It fixes the coordinate
system for utility differences but does not restrict substitution
patterns.

Let \(W_i:=\Delta U_i\), let
\(w:=(w_2,\ldots,w_J)^\top\in\mathbb{R}^d\) denote a realization of
\(W_i\), and set \(w_1:=0\) for the base alternative.
The observed choice corresponds to a fixed region of the differenced
latent-utility space. If the base alternative is chosen,
\[
    \mathcal{D}_1
    :=
    \left\{
        w \in \mathbb{R}^d
        \mid
        w_j < 0 \ \text{for all } j=2,\ldots,J
    \right\}.
\]
For a non-base alternative \(j=2,\ldots,J\),
\[
    \mathcal{D}_j
    :=
    \left\{
        w \in \mathbb{R}^d
        \mid
        w_j > 0
        \ \text{and}\
        w_j > w_k \ \text{for all } k=2,\ldots,J,\ k\neq j
    \right\}.
\]
Thus the probability of choice \(j\) is the integral of the
\(\mathcal{N}(\DX_i\beta,\Omega)\) density of
\(\Delta U_i\) over the fixed region \(\mathcal{D}_j\):
\begin{equation}\label{eq:mnp-prob}
    P_{ij}(\theta)
    :=
    \int_{\mathcal{D}_j}
        \phi_d(w;\DX_i\beta,\Omega)
    \,dw,
\end{equation}
where \(\theta\) collects \(\beta\) and the free covariance parameters
specified below, and \(\phi_d(\cdot;\mu,\Omega)\) denotes the
\(d\)-variate normal density.

\subsection{Covariance Normalization}

The model is identified only up to scale. For any \(c>0\), the
transformation
\[
    (\beta,\Omega)
    \mapsto
    (c\beta,c^2\Omega)
\]
leaves all choice probabilities unchanged. The main SAUSS
implementation uses the covariance Cholesky normalization

\begin{equation}\label{eq:covariance-cholesky}
    \Omega:=LL^\top,
    \qquad
    L :=
    \begin{pmatrix}
        1 & & \\
        a_{21} & e^{\lambda_2} & \\
        \vdots & \ddots & \ddots \\
        a_{d1} & \cdots & a_{d,d-1} & e^{\lambda_d}
    \end{pmatrix}.
\end{equation}
Thus \(L_{11}=1\) fixes the standard deviation of \(\nu_{i2}\) at one
and thereby fixes the scale; the remaining diagonal elements are
positive by construction, and the strictly lower-triangular elements
are unrestricted. When \(d=1\), this normalization reduces to
\(\Omega=1\), as in binary probit.

Let
\[
    \theta:=(\beta^\top,\theta_L^\top)^\top,
    \qquad
    \theta_L:=\operatorname{free}(L).
\]

Here \(\operatorname{free}(L)\) stacks the strictly lower-triangular
entries of \(L\) and the log-diagonal coordinates
\(\lambda_k:=\log L_{kk}\), \(k=2,\ldots,d\), excluding the fixed
element \(L_{11}\).

When \(d=1\), \(\theta_L\) is empty and the MNP normalization reduces
to the usual binary-probit scale normalization.

\subsection{Likelihood and Descent Target}

Let \(d_{ij}:=\mathbb{I}\{y_i=j\}\), where \(\mathbb{I}\{\cdot\}\)
denotes the indicator function. The sample log-likelihood is
\begin{equation}\label{eq:mnp-loglik}
    \mathcal{L}_n(\theta)
    :=
    \sum_{i=1}^n \sum_{j=1}^J
        d_{ij}\log P_{ij}(\theta).
\end{equation}
Because \(P_{ij}(\theta)\) lacks a closed form, simulated maximum
likelihood replaces it by a simulated probability estimate. If
\(\tilde P_{ij}^{(r)}(\theta;u_i^{(r)})\) is the contribution from
the \(r\)-th sequence of random draws, the standard simulated
objective is
\begin{equation}\label{eq:mnp-sim-loglik}
    \hat{\mathcal{L}}_n(\theta)
    :=
    \sum_{i=1}^n \sum_{j=1}^J
        d_{ij}
        \log
        \left(
            \frac{1}{\Rsim}
            \sum_{r=1}^{\Rsim}
                \tilde P_{ij}^{(r)}(\theta;u_i^{(r)})
        \right).
\end{equation}
SAUSS does not maximize \eqref{eq:mnp-sim-loglik}. It targets the
score of \eqref{eq:mnp-loglik} directly through simulation and uses
the negative simulated score as a descent direction for the negative
log-likelihood.

\subsection{Unbiased Simulated Scores}

The score for one observed choice is
\[
    s_i(\theta)
    :=
    \nabla_\theta \log P_{iy_i}(\theta).
\]
By Lemma~\ref{lem:hm-identity},
\[
    s_i(\theta)
    =
    \mathbb{E}_\theta
    \left[
        \nabla_\theta
        \log \phi_d(W_i;\DX_i\beta,\Omega)
        \mid
        W_i\in\mathcal{D}_{y_i}
    \right].
\]
The accept--reject simulator draws
\(\nu_i\sim\mathcal{N}(0,\Omega)\), forms
\(W_i:=\DX_i\beta+\nu_i\), and keeps the draw if
\(W_i\in\mathcal{D}_{y_i}\). Thus an accepted draw has distribution
\[
    \nu_i
    \mid
    \{\DX_i\beta+\nu_i\in\mathcal{D}_{y_i}\}.
\]
Conditional on acceptance, \(W_i\) has the latent normal distribution
restricted to the observed choice region, so the average latent-score
contribution over accepted draws is an unbiased estimate of
\(s_i(\theta)\). This statement is exact for any fixed number
\(\Rsim \geq 1\) of accepted draws, provided the simulator runs until
those accepted draws are obtained.

In the one-draw score formulas below, \(\nu_i\) denotes a generic
accepted draw from this observation-specific conditional distribution,
equivalently \(\nu_i=W_i-\DX_i\beta\). Let \(Q:=\Omega^{-1}\) for
notational convenience. We suppress the dependence of this accepted
draw on \(\beta\) and \(\Omega\) when no confusion arises.

The one-draw complete-data score contribution for \(\beta\) is
\begin{equation}\label{eq:mnp-beta-score}
    s_{\beta i}^{\mathrm{c}}
    :=
    \DX_i^\top Q \nu_i.
\end{equation}
The matrix score with respect to \(\Omega\) is
\begin{equation}\label{eq:mnp-omega-score}
    S_{\Omega i}^{\mathrm{c}}
    :=
    \frac{1}{2}
    Q(\nu_i\nu_i^\top-\Omega)Q.
\end{equation}
Because \(\Omega=LL^\top\), the corresponding score for the
Cholesky factor is
\begin{equation}\label{eq:mnp-L-score}
    S_{L i}^{\mathrm{c}}
    :=
    2S_{\Omega i}^{\mathrm{c}}L
    =
    Q(\nu_i\nu_i^\top-\Omega)QL.
\end{equation}

The descent implementation uses the negative of these score
contributions. For any \(d\times d\) matrix \(A\), write
\(\mathcal{C}_{\theta_L}(A;L)\) for the vector that keeps the free
lower-triangular entries of \(A\), replaces each free diagonal entry
\(A_{kk}\), \(k=2,\ldots,d\), by \(A_{kk}L_{kk}\) to account for the
log-diagonal parameterization
 \(\lambda_k:=\log L_{kk}\), \(k=2,\ldots,d\), and drops the fixed coordinate
\((1,1)\).

Let \(\nu_{ir}\), \(r=1,\ldots,\Rsim\), denote independent accepted
draws from the conditional distribution above for observation \(i\).
Averaging the negative score contributions over these draws gives
\begin{equation}\label{eq:mnp-one-observation-gradient}
    \widehat g_{\beta i}(\theta)
    :=
    -\frac{1}{\Rsim}
    \sum_{r=1}^{\Rsim}
        \DX_i^\top Q\nu_{ir},
    \qquad
    \widehat g_{L i}(\theta)
    :=
    -\frac{1}{\Rsim}
    \sum_{r=1}^{\Rsim}
        \mathcal{C}_{\theta_L}
        \{Q(\nu_{ir}\nu_{ir}^\top-\Omega)QL;L\}.
\end{equation}
Write
\[
    \widehat g_i(\theta)
    :=
    \bigl(
        \widehat g_{\beta i}(\theta)^\top,
        \widehat g_{L i}(\theta)^\top
    \bigr)^\top
\]
for the loss-gradient contribution in the full parameter vector.

\subsection{SAUSS Algorithm}

The main recursion uses a scalar decreasing learning rate,
\begin{equation}\label{eq:mnp-learning-rates}
    \gamma_t
    :=
    {\gamma_0}(t-1)^{-a},
    \qquad
    a\in(1/2,1),
    \qquad t\geq2.
\end{equation}
This is the standard decreasing-rate form used by the theory.

\begin{algorithm}[!htbp]
\DontPrintSemicolon
\caption{SAUSS for Multinomial Probit}\label{alg:sauss-mnp}
\KwIn{Data \(\{(\DX_i,y_i)\}_{i=1}^n\); \(\theta_{\mathrm{init}}\); tuning parameters \(\{T,T_0,m,\Rsim,{\gamma_0},a\}\), with \(T\geq2\) and \(T_0\in\{0,\ldots,T-1\}\)}
\KwOut{Averaged parameter \(\bar\theta_T\), reported as \(\hat\beta\) and \(\hat\Omega=\hat L\hat L^\top\)}
Set \(\theta_1\leftarrow\theta_{\mathrm{init}}\), \(N_{\mathrm{avg}}\leftarrow 0\), and \(\bar\theta\leftarrow 0\)\;
\If{\(T_0=0\)}{
    \(N_{\mathrm{avg}}\leftarrow 1\) and \(\bar\theta\leftarrow\theta_1\)\;
}
\For{\(t=2,\ldots,T\)}{
    Draw \(I_{t1},\ldots,I_{tm}\) independently and uniformly from \(\{1,\ldots,n\}\), and set \(\mathcal{B}_t\leftarrow(I_{t1},\ldots,I_{tm})\)\;
    Construct \(\beta\) and \(L\) from \(\theta_{t-1}\), with \(L_{11}=1\)\;
    Set \(\Omega\leftarrow LL^\top\) and \(Q\leftarrow\Omega^{-1}\)\;
    Set \(\gamma_t\leftarrow{\gamma_0}(t-1)^{-a}\)\;
    \For{\(b=1,\ldots,m\)}{
        Set \(i\leftarrow I_{tb}\)\;
        Run the accept--reject simulator until \(\Rsim\) accepted draws are obtained for observation \(i\)\;
        Compute \(\widehat g_{tb}(\theta_{t-1})\) from \eqref{eq:mnp-one-observation-gradient} for observation \(i\), using fresh accepted draws\;
    }
    \(\bar g_t\leftarrow m^{-1}\sum_{b=1}^m\widehat g_{tb}(\theta_{t-1})\)\;
    \(\theta_t\leftarrow \theta_{t-1}-\gamma_t\bar g_t\)\;
    \If{\(t>T_0\)}{
        \(N_{\mathrm{avg}}\leftarrow N_{\mathrm{avg}}+1\)\;
        \(\bar\theta\leftarrow\bar\theta+(\theta_t-\bar\theta)/N_{\mathrm{avg}}\)\;
    }
}
Set \(\bar\theta_T\leftarrow\bar\theta\)\;
Construct \(\hat\beta\) and \(\hat L\) from \(\bar\theta_T\), and set \(\hat\Omega\leftarrow\hat L\hat L^\top\)\;
\end{algorithm}

SAUSS resamples the simulation draws whenever an observation enters a
score evaluation. This differs from simulated likelihood implementations
that use common random numbers to stabilize a fixed simulated likelihood
objective. Here the algorithm does not optimize such a fixed simulated
objective. It uses simulation to produce an unbiased
stochastic direction for the gradient of the exact average negative
log-likelihood, so the simulation randomness is renewed along the
stochastic approximation path.

Algorithm~\ref{alg:sauss-mnp} takes as data input the
differenced covariates and observed choices
\(\{(\DX_i,y_i)\}_{i=1}^n\). The algorithmic parameter vector is
\(\theta=(\beta^\top,\theta_L^\top)^\top\). Although the algorithm is
written in terms of \(\theta\), each score evaluation reconstructs
\(\beta\) and the Cholesky factor \(L\) from \(\theta\), imposes
\(L_{11}=1\), and sets \(\Omega=LL^\top\).
The initial value \(\theta_{\mathrm{init}}\) is therefore
an input to the procedure.

The computational budget is summarized by \(T\), \(m\), and \(\Rsim\).
Here \(T\) is the final iterate index, so the algorithm performs
\(T-1\) stochastic approximation updates; \(m\) is
the number of observations in each mini-batch, and \(\Rsim\) is the
number of accepted simulation draws used to form each
observation-level loss-gradient. With iid mini-batches sampled from
the empirical distribution, \(m(T-1)/n\) is the expected number of
passes through the sample.

The averaging input \(T_0\) determines when Polyak--Ruppert averaging
begins. Algorithm~\ref{alg:sauss-mnp} reports the average of the
iterates \(\theta_{T_0+1},\ldots,\theta_T\), consistent with the
framework in Section~\ref{sec:sauss-framework}; when \(T_0=0\), this
average includes the initial estimator \(\theta_1\).
{The asymptotic results in
Section~\ref{sec:asymptotic-theory} apply when \(T_0=T_0(T)\) satisfies
\(T_0/T\to0\).}

The learning-rate inputs in Algorithm~\ref{alg:sauss-mnp} are
\({\gamma_0}\) and \(a\). The exponent \(a\in(1/2,1)\) gives the
decreasing-rate form used in the theory, while \({\gamma_0}\) controls the
overall scale of the stochastic approximation step. For simplicity,
the main algorithm uses a single scalar learning rate.
{In computation, fixed block-specific scale constants for the
coefficient and Cholesky components can be used as a diagonal
preconditioner, as described in Remark~\ref{rem:minibatch}.}

Algorithm~\ref{alg:sauss-mnp} describes the exact recursion: each
observation-level simulator runs until \(\Rsim\) accepted draws are
obtained, so the simulated score contribution is unbiased for
\(s_i(\theta)\) for every fixed \(\Rsim\geq1\); equivalently,
\(\widehat g_i(\theta)\) is unbiased for \(-s_i(\theta)\). Thus the
ideal recursion uses a fixed number \(\Rsim\) of accepted draws, but the
number of proposal draws, and hence runtime, is random.

\begin{remark}[Capped computation]
In practice, we impose a large cap \(\Kmax\) on the number of proposal
draws. If the cap is reached after at least one acceptance but before
\(\Rsim\) acceptances, averaging the available accepted draws remains
centered on the exact loss-gradient, but uses a random, smaller
simulation size and has a different conditional variance. If no draw
is accepted, returning a zero contribution may bias the capped score
estimator. We therefore monitor trials per accepted draw and the
frequencies of partial-cap and no-acceptance events.
\end{remark}

\section{Asymptotic Theory}\label{sec:asymptotic-theory}

This section states the asymptotic theory for the SAUSS recursion with the exact simulator, which runs until \(\Rsim\) accepted draws are obtained for each observation. Inference based on these results is developed in Section~\ref{sec:inference}. Throughout, \(n\) denotes the sample size, \(T\) is the final iterate index, \(m\) is the mini-batch size, and \(\Rsim\) is the number of accepted simulation draws per observation. The initial estimator is \(\theta_1\), and updates produce \(\theta_t\) for \(t=2,\ldots,T\), so the run contains \(T-1\) stochastic approximation updates.

{All statements indexed by the run length are taken along a
sequence in which \(T\to\infty\) and the pilot size
\(n_0=n_0(T)\to\infty\); no relative rate between \(n_0\) and \(T\) is
imposed. The mini-batch size \(m\), accepted-draw count \(\Rsim\), and
learning-rate constants \((\gamma_0,a)\), as well as the parameter
dimension \(p\) (and hence \(J\) in the MNP specialization), are held
fixed.}

The recursion is \eqref{eq:sa-update} of Section~\ref{sec:sgd}, with mini-batch direction \(\hat G_t(\theta)\) built from the one-draw loss-gradient simulator \(g_i(\theta,U)=-\psi_i(\theta,U)\). Let \(\mathbb{E}_t^*\) denote expectation over the simulation draws in iteration \(t\), conditional on the past and the selected mini-batch, and let \(\mathcal F_t\) be the sigma-field generated by \(\theta_1\) and all mini-batch and simulation draws through iteration \(t\). For \(t\geq2\), define
\[
    G_t(\theta)
    =
    \mathbb{E}_t^*\{\hat G_t(\theta)\},
    \qquad
    \mathcal R(\theta)
    =
    \mathbb{E}\{G_t(\theta)\mid\mathcal F_{t-1}\},
    \qquad
    H=\nabla_\theta\mathcal R(\theta_0).
\]
Under population sampling, \(\mathcal R(\theta)=\mathbb{E}\{g_i(\theta)\}=-\mathbb{E}\{s_i(\theta)\}\), so the target \(\theta_0\) satisfies \(\mathcal R(\theta_0)=0\), and \(H\) is the negative Jacobian of the population score, that is, the population information matrix under correct specification.

\subsection{Assumptions}

The assumptions below are stated for the recursion with the exact simulator; the proof appendix uses them with \(m=1\) (see Remark~\ref{rem:minibatch}). Throughout, \(C<\infty\) and \(c>0\) denote generic constants.

\begin{assumption}[Local identification and information]\label{ass:population-info}
The normalized parameter space \(\Theta\subset\mathbb{R}^p\) contains \(\theta_0\) as an interior point, and \(\theta_0\) is the only zero of \(\mathcal R(\theta)\) in a neighborhood \(\mathcal N_0\) of \(\theta_0\). The Jacobian \(H(\theta)=\nabla_\theta\mathcal R(\theta)\) exists, is Lipschitz continuous, and satisfies \(\|H(\theta)\|\leq C\) on \(\mathcal N_0\); \(H=H(\theta_0)\) is symmetric with \(\lambda_{\min}(H)>c\).
\end{assumption}

Assumption~\ref{ass:population-info} gives the local drift condition used for consistency: by the integral form of the mean value theorem, there is a ball \(B_\rho=\{\theta:\|\theta-\theta_0\|\leq\rho\}\subset\mathcal N_0\) on which
\[
    (\theta-\theta_0)'\mathcal R(\theta)
    \geq
    c_0\|\theta-\theta_0\|^2
    \quad\text{and}\quad
    \|\mathcal R(\theta)-H(\theta-\theta_0)\|\leq C\|\theta-\theta_0\|^2
\]
for some \(c_0>0\). This is the descent-form analogue of local concavity of the population log-likelihood; symmetry of \(H\) holds for likelihood problems, where \(H\) is the population information matrix.

\begin{assumption}[Local initialization]\label{ass:algorithm-stability}
The recursion is initialized by a preliminary estimator \(\theta_1=\tilde\theta_{n_0}\) satisfying \(\tilde\theta_{n_0}\xrightarrow{p}\theta_0\) as \(n_0\to\infty\).
\end{assumption}

\begin{assumption}[Smoothness of the exact loss-gradient]\label{ass:smooth-G}
For each observation, the exact loss-gradient \(G_t(\theta)=\mathbb{E}_t^*\{\hat G_t(\theta)\}\) is continuously differentiable in \(\theta\), and
\[
    \sup_{\theta\in\mathcal N_0}\mathbb{E}\big(\|\nabla_\theta G_t(\theta)\|^2\mid\mathcal F_{t-1}\big)\leq C .
\]
\end{assumption}

The simulated loss-gradient \(g_i(\theta,U)\) may be discontinuous in \(\theta\); Assumption~\ref{ass:smooth-G} concerns only its conditional expectation given the data, which for LDV models is the exact score contribution and is smooth.

\begin{assumption}[Unbiased simulated loss-gradients and moments]\label{ass:unbiased-scores}
For every \(\theta\in\mathcal N_0\), the one-draw simulator satisfies
\[
    \mathbb{E}\{g_i(\theta,U)\mid D_i\}
    =
    g_i(\theta)
    =
    -s_i(\theta),
\]
and the simulation draws are fresh across observations, accepted draws, and stochastic approximation iterations. Let \(a\in(1/2,1)\) be the learning-rate exponent of Assumption~\ref{ass:sampling-steps}. For some \(\delta>0\) and some \(p_0>(1-a)^{-1}\), with \(q=\max\{2+\delta,2p_0\}\),
\[
    \sup_{\theta\in\mathcal N_0}
    \mathbb{E}\big(\|g_i(\theta,U)\|^{q}\mid\mathcal F_{t-1}\big)
    \leq C,
    \qquad
    \sup_{\theta\in\mathcal N_0}
    \mathbb{E}\big(\|G_t(\theta)\|^{q}\mid\mathcal F_{t-1}\big)
    \leq C .
\]
\end{assumption}

The \(2+\delta\) moments are used for the Lindeberg condition; the \(2p_0\) moments are used to control the Polyak--Ruppert weighted remainder, and \(p_0\) grows as \(a\to1\).

\begin{assumption}[Continuity of the simulation variance]\label{ass:sim-variance}
Let \(A_t(\theta)=\operatorname{Var}\{g_i(\theta,U)\mid D_i\}\) be the conditional covariance of the one-draw simulated loss-gradient given the observation. Then \(\theta\mapsto\mathbb{E}\{A_t(\theta)\mid\mathcal F_{t-1}\}\) is Lipschitz continuous on \(\mathcal N_0\) with a bounded Lipschitz constant.
\end{assumption}

Assumption~\ref{ass:sim-variance} replaces a mean-square continuity condition on the simulator itself: it restricts only the conditional variance of the simulated loss-gradient, not a coupling of draws at different parameter values, and it is what the variance-consistency step of the martingale functional central limit theorem uses.

\begin{assumption}[Mini-batches and learning rates]\label{ass:sampling-steps}
At each update \(t\geq2\), the mini-batch indices \(I_{t1},\ldots,I_{tm}\) are sampled independently with replacement from the population distribution, or from the empirical distribution in the fixed-sample version, independently of the past conditional on the current iterate. The learning rate is
\[
    \gamma_t=\gamma_0(t-1)^{-a},
    \qquad
    a\in(1/2,1),
    \qquad t\geq2.
\]
Hence \(\sum_{t\geq2}\gamma_t=\infty\), \(\sum_{t\geq2}\gamma_t^2<\infty\), and \(\sum_{t\geq2}\gamma_t/\sqrt{t}<\infty\).
\end{assumption}

\begin{assumption}[Conditional variance limit]\label{ass:variance-limit}
For \(t\geq2\), let
\[
    \zeta_t(\theta)=\hat G_t(\theta)-G_t(\theta),
    \qquad
    \xi_t(\theta)=G_t(\theta)-\mathcal R(\theta),
\]
and write \(\varepsilon_t=\zeta_t(\theta_{t-1})+\xi_t(\theta_{t-1})\) for the noise evaluated at the current iterate, which is a martingale difference with respect to \(\mathcal F_t\). The limit
\[
    V_{m,\Rsim}
    =
    \operatorname{plim}_{t\to\infty}
    \Big[
        \operatorname{Var}\{G_t(\theta_0)\mid\mathcal F_{t-1}\}
        +
        \frac{1}{m\Rsim}\mathbb{E}\{A_t(\theta_0)\mid\mathcal F_{t-1}\}
    \Big]
\]
exists and is positive definite.
\end{assumption}

For iid mini-batches with replacement and iid simulation draws, the variance in Assumption~\ref{ass:variance-limit} has the explicit decomposition
\begin{equation}\label{eq:vmr}
    V_{m,\Rsim}
    =
    \frac{1}{m}\operatorname{Var}\{g_i(\theta_0)\}
    +
    \frac{1}{m\Rsim}
    \mathbb{E}
    \left[
        \operatorname{Var}\{g_i(\theta_0,U)\mid D_i\}
    \right].
\end{equation}
The first term is mini-batch sampling variation in the exact loss-gradient. The second term is simulation variation conditional on the observation. Since \(g_i(\theta_0)=-s_i(\theta_0)\), the first term equals \(\operatorname{Var}\{s_i(\theta_0)\}/m\).

\begin{assumption}[Smooth transformations]\label{ass:transformation}
{When inference concerns a scalar parameter \(\kappa=h(\theta_0)\), the map \(h:\Theta\to\mathbb R\) is twice continuously differentiable on \(\mathcal N_0\) and satisfies
\[
    \sup_{\theta\in\mathcal N_0}\|\nabla^2h(\theta)\|
    +\|\nabla h(\theta_0)\|
    \leq C.
\]}
\end{assumption}

\subsection{Main stochastic approximation result}

\begin{theorem}[Consistency, linear representation, and FCLT]\label{thm:sa-linear}\label{thcflc}
Let \(m=1\), so that each update uses one observation and \(\Rsim\) accepted draws, and suppose Assumptions~\ref{ass:population-info}--\ref{ass:variance-limit} hold. Define the average and, for \(r\in[0,1]\), the partial-sum and limiting processes by
\[
\begin{aligned}
    \bar\theta_T
    &=
    \frac{1}{T}\sum_{t=1}^{T}\theta_t, \\
    X_T(r)
    &=
    \frac{1}{\sqrt{T}}
    \sum_{t=1}^{\lfloor Tr\rfloor}(\theta_t-\theta_0), \\
    X(r)
    &=
    \big(H^{-1}V_{1,\Rsim}H^{-1}\big)^{1/2}W(r),
\end{aligned}
\]
where \(W\) is a \(p\)-dimensional standard Wiener process.
{Let \(\tau\) be the first exit time from \(B_\rho\) and set
\(\mathcal E:=\{\tau=\infty\}\). Then, for every \(\epsilon>0\), there
exists \(\bar\gamma(\epsilon)>0\) such that, if
\(\gamma_0\leq\bar\gamma(\epsilon)\),
\(\mathbb P(\mathcal E^c)\leq\epsilon+o(1)\), and the following
conclusions hold:}
\begin{enumerate}
\item[(i)] (consistency) {\(\|\theta_t-\theta_0\|\mathbb I\{\mathcal E\}\xrightarrow{p}0\)} as \(t\to\infty\);
\item[(ii)] (linear representation)
\begin{equation}\label{eq:pr-linear-rep}
    \sqrt{T}(\bar\theta_T-\theta_0)
    =
    -H^{-1}
    \frac{1}{\sqrt{T}}
    \sum_{t=2}^{T}
        \{\zeta_t(\theta_{t-1})+\xi_t(\theta_{t-1})\}
    +
    Z_T,
    \qquad {Z_T\mathbb I\{\mathcal E\}\xrightarrow{p}0},
\end{equation}
and the same representation holds uniformly in \(r\) for the partial sums \(X_T(r)\);
\item[(iii)] (FCLT) for every bounded continuous \(f\) on \(D[0,1]^p\),
\[
    \limsup_{T\to\infty}\Big|\mathbb{E}f(X_T)-\mathbb{E}f(X)\Big|\leq 2\epsilon\sup|f| ;
\]
in particular \(\sqrt{T}(\bar\theta_T-\theta_0)=X_T(1)\) is asymptotically \(\mathcal N(0,H^{-1}V_{1,\Rsim}H^{-1})\) up to the same \(\epsilon\)-error.
\end{enumerate}
\end{theorem}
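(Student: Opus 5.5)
The plan is to run the Polyak--Juditsky argument on a localized recursion and then transfer the conclusions to the event \(\mathcal E\). Write the update as
\[
\theta_t-\theta_0=\theta_{t-1}-\theta_0-\gamma_t\mathcal R(\theta_{t-1})-\gamma_t\varepsilon_t ,
\]
with the martingale difference \(\varepsilon_t=\zeta_t(\theta_{t-1})+\xi_t(\theta_{t-1})\). By Assumptions~\ref{ass:unbiased-scores} and~\ref{ass:sampling-steps}, \(\mathbb E(\varepsilon_t\mid\mathcal F_{t-1})=0\). The first step is to stop the recursion at \(\tau\). Set \(\Delta_t=(\theta_{t\wedge\tau}-\theta_0)\). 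On \(\{t\le\tau\}\) the drift inequality \((\theta-\theta_0)'\mathcal R(\theta)\ge c_0\|\theta-\theta_0\|^2\) and the moment bound of Assumption~\ref{ass:unbiased-scores} give
\[
\mathbb E(\|\Delta_t\|^2\mid\mathcal F_{t-1})\le(1-2c_0\gamma_t+C\gamma_t^2)\|\Delta_{t-1}\|^2+C\gamma_t^2 .
\]
By Robbins--Siegmund, \(\|\Delta_t\|\to0\) almost surely on \(\mathcal E\). This step also shows \(\mathbb E\|\Delta_t\|^2=O(\gamma_t)+o(1)\) from the initialization.

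To bound \(\mathbb P(\mathcal E^c)\), apply Doob's maximal inequality to the martingale \(\sum_{s\le t}\gamma_s\varepsilon_s\mathbb I\{s\le\tau\}\). Its quadratic variation is at most \(C\gamma_0^2\sum_s(s-1)^{-2a}\). The drift term is contractive inside \(B_\rho\), so a path can leave \(B_\rho\) only if either \(\|\theta_1-\theta_0\|>\rho/2\) or \(\sup_t\|\sum_{s\le t}\prod_{k=s+1}^{t}(I-\gamma_kH)\gamma_s\varepsilon_s\|\) exceeds \(\rho/4\). The first event has probability \(o(1)\) by Assumption~\ref{ass:algorithm-stability}. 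The second has probability \(O(\gamma_0^2)\le\epsilon\) once \(\gamma_0\le\bar\gamma(\epsilon)\). The \(q>2\) moments and a Burkholder inequality control the maximum of the weighted sums uniformly. This yields \(\mathbb P(\mathcal E^c)\le\epsilon+o(1)\) and part (i).

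For part (ii), linearize with \(\mathcal R(\theta)=H(\theta-\theta_0)+O(\|\theta-\theta_0\|^2)\). This gives
\[
\Delta_{t-1}=H^{-1}\gamma_t^{-1}(\Delta_{t-1}-\Delta_t)-H^{-1}\varepsilon_t-H^{-1}r_t ,
\qquad \|r_t\|\le C\|\Delta_{t-1}\|^2 .
\]
Sum over \(t\le\lfloor Tr\rfloor\) and divide by \(\sqrt T\); the representation \eqref{eq:pr-linear-rep} then follows once three terms vanish uniformly in \(r\).
\begin{itemize}
\item The Abel-summation term \(T^{-1/2}\sum\gamma_t^{-1}(\Delta_{t-1}-\Delta_t)\) is handled using \(\gamma_t^{-1}-\gamma_{t-1}^{-1}=O(t^{a-1})\) and \(\mathbb E\|\Delta_t\|^2=O(t^{-a})\).
\item The quadratic remainder satisfies \(T^{-1/2}\sum\|\Delta_{t-1}\|^2=O_p(T^{1/2-a})\to0\), which is where \(a>1/2\) enters.
\item The initial and averaging-start terms are \(O_p(T^{-1/2})\) times \(\gamma_1^{-1}\|\Delta_1\|\).
\end{itemize}
The main obstacle is making the Abel term vanish uniformly over \(r\in[0,1]\), not just at \(r=1\). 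Here I would use the \(2p_0\)-moment bound with \(p_0>(1-a)^{-1}\). It gives \(\mathbb E\|\Delta_t\|^{2p_0}\le Ct^{-ap_0}\), and a union/maximal bound over \(t\le T\) then gives \(\max_{t\le T}\gamma_t^{-1}\|\Delta_t\|/\sqrt T\to0\) in probability. The exponent condition is exactly what makes \(T^{1/p_0}\cdot T^{a-1/2}\cdot T^{-a/2}\cdot T^{-1/2}\)-type terms decay. Also, \(\nabla_\theta G_t\) enters only through the smoothness used to replace \(\varepsilon_t\) by noise at \(\theta_0\) where needed (Assumption~\ref{ass:smooth-G}).

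For part (iii), apply the martingale FCLT to \(T^{-1/2}\sum_{t\le\lfloor Tr\rfloor}\varepsilon_t\mathbb I\{t\le\tau\}\). The conditional variance is \(\operatorname{Var}\{G_t(\theta_{t-1})\mid\mathcal F_{t-1}\}+\Rsim^{-1}\mathbb E\{A_t(\theta_{t-1})\mid\mathcal F_{t-1}\}\). This converges to \(V_{1,\Rsim}\) on \(\mathcal E\) by consistency, by Lipschitz continuity (Assumptions~\ref{ass:smooth-G}, \ref{ass:sim-variance}), and by Assumption~\ref{ass:variance-limit}. The \(2+\delta\) moments give the conditional Lindeberg condition. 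The continuous mapping theorem with \(-H^{-1}\) then gives convergence of the localized process to \(X\). Finally, \(X_T\) and the localized process agree on \(\mathcal E\), so
\[
|\mathbb Ef(X_T)-\mathbb Ef(X)|\le 2\sup|f|\,\mathbb P(\mathcal E^c)+o(1).
\]
Combined with \(\mathbb P(\mathcal E^c)\le\epsilon+o(1)\), this is the stated bound, and evaluating at \(r=1\) gives the normal approximation for \(\sqrt T(\bar\theta_T-\theta_0)\).
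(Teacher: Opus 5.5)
\textbf{The gap is in part (iii).} You apply the martingale FCLT to the stopped sum $T^{-1/2}\sum_{2\le t\le\lfloor Tr\rfloor}\varepsilon_t\mathbb I\{t\le\tau\}$. Its normalized predictable quadratic variation is $T^{-1}\sum_{2\le t\le\lfloor Tr\rfloor}\mathbb I\{\tau>t-1\}C_t(\theta_{t-1})$, where $C_t(\theta)=\operatorname{Var}\{G_t(\theta)\mid\mathcal F_{t-1}\}+\Rsim^{-1}\mathbb E\{A_t(\theta)\mid\mathcal F_{t-1}\}$. As you note, this tends to $rV_{1,\Rsim}$ only on $\mathcal E$. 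On $\{\tau<T\}$ it is frozen near $\min(r,\tau/T)V_{1,\Rsim}$, and with $\gamma_0$ fixed nothing forces $\mathbb P(\tau<T)\to0$: you only know $\mathbb P(\mathcal E^c)\le\epsilon+o(1)$, and after an early exit the stopped sum is $O_p(T^{-1/2})$. The martingale FCLT needs the quadratic-variation condition in probability on the whole space. Your localized process therefore converges at best to a mixture of $X$ and the zero process, not to $X$, so your closing inequality is not justified. You also cannot apply the theorem conditionally on $\mathcal E$, because $\mathcal E$ depends on the entire future path and conditioning on it destroys the martingale structure.

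The missing idea, which the paper uses, is to \emph{continue} the innovations rather than stop them. On an enlarged space, set $\widetilde\varepsilon_t=\mathbb I\{\tau>t-1\}\varepsilon_t(\theta_{t-1})+\mathbb I\{\tau\le t-1\}\varepsilon_t^0$, where $\varepsilon_t^0$ is a fresh innovation generated at $\theta_0$. This is still a martingale difference with bounded $(2+\delta)$-th conditional moments. Its conditional covariance differs from $C_t(\theta_0)$ only on $\{\tau>t-1\}$, and there by at most $C\|\Delta_{t-1}\|\mathbb I\{\tau>t-1\}$, so the quadratic-variation condition holds everywhere. Moreover, $\widetilde\varepsilon_t=\zeta_t+\xi_t$ on $\mathcal E$. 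With $-H^{-1}T^{-1/2}\sum_{2\le t\le\lfloor Tr\rfloor}\widetilde\varepsilon_t$ as your localized process, your bound $2\sup|f|\,\mathbb P(\mathcal E^c)+o(1)$ goes through.

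\textbf{Localization of your moment bounds.} For the stopped process $\theta_{t\wedge\tau}$, $\mathbb E\|\Delta_t\|^2$ does not tend to zero. After exit the path is frozen outside $B_\rho$, and $\Delta_1$ need not have moments because the pilot is only consistent in probability. The bounds you actually need in (ii) are $\mathbb E(\|\Delta_t\|^2\mathbb I\{\tau>t\})\le C\gamma_t$ and $\mathbb E(\|\Delta_t\|^{2p_0}\mathbb I\{\tau>t\})\le C\gamma_t^{p_0}$. The second needs its own higher-moment recursion using the $2p_0$ noise moments, which you assert without proof. With these bounds, your Abel-summation route to (ii) is a valid alternative to the paper's:
\begin{itemize}
\item The paper compares $\Delta_t$ with the linear recursion driven by $\widetilde\varepsilon_t$ and controls the nonlinearity through the bounded Polyak--Ruppert weights. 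It spends the $2p_0$ moments, via Burkholder, on the weight-error sum $\sum_j w_j^k\widetilde\varepsilon_j$.
\item You spend them instead on the boundary term $\max_{t\le T}\gamma_t^{-1}\|\Delta_t\|/\sqrt T$.
\end{itemize}
Both union bounds give $O(T^{1-(1-a)p_0})$.

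\textbf{The exit bound.} Your Doob argument works, but ``the drift is contractive'' does not supply it by itself. You need to show that $\Delta_t+e_t$, with $e_t$ your linearly filtered noise, satisfies $\|\Delta_t+e_t\|\le\max\{\|\Delta_1\|,\sup_s\|e_s\|\}$ while inside $B_\rho$. This requires absorbing the quadratic remainder by taking $\rho$ small relative to $\lambda_{\min}(H)$ divided by the Lipschitz constant of $H(\cdot)$. You also need $\sup_t\|e_t\|\le C\sup_t\|M_t\|$ for the raw martingale $M_t=\sum_{s\le t}\gamma_s\varepsilon_s\mathbb I\{s\le\tau\}$. The paper sidesteps all of this by applying Ville's inequality directly to the supermartingale $\|\Delta_{t\wedge\tau}\|^2+C\sum_{s>t}\gamma_s^2$.

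Your Robbins--Siegmund argument for (i) is fine and in fact gives almost-sure convergence on $\mathcal E$.
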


\begin{remark}[On the \(\epsilon\)-formulation]\label{rem:epsilon}
{The tolerance \(\epsilon\) originates in a single step of the proof. For any \(b\in(0,\rho)\), the stopped-process argument and a supermartingale maximal inequality give
\[
    \mathbb P(\tau<\infty)
    \leq
    \mathbb P(\|\theta_1-\theta_0\|>b)
    +
    \frac{b^2+C\gamma_0^2\sum_{s\geq1}s^{-2a}}{\rho^2}.
\]
For a given \(\epsilon\in(0,1)\), set \(b_\epsilon=\rho\sqrt{\epsilon/2}\) and choose \(\gamma_0\) sufficiently small that
\[
    C\gamma_0^2\sum_{s\geq1}s^{-2a}
    \leq
    \frac{\rho^2\epsilon}{2}.
\]
Then
\[
    \mathbb P(\tau<\infty)
    \leq
    \mathbb P(\|\theta_1-\theta_0\|>b_\epsilon)
    +\epsilon
    =
    \epsilon+o(1),
\]
where the final equality follows from pilot consistency as \(n_0=n_0(T)\to\infty\). All conclusions of the theorem are established on the event \(\{\tau=\infty\}\), and \(\epsilon\) bounds the probability of its complement.

Ordinary weak convergence without the \(\epsilon\)-error is not available under the stated local assumptions because the unbounded simulation noise gives the recursion a positive probability of leaving the neighborhood on which those assumptions hold. Establishing such convergence would require additional stability conditions controlling the recursion outside that neighborhood. These conditions are generally unsuitable for MNP because the population objective can flatten as latent variances increase. We therefore retain the local \(\epsilon\)-formulation, which applies to the diminishing learning-rate schedule used in implementation, with \(\gamma_0\) and \(a\) held fixed as \(T\) increases.}
\end{remark}

\begin{remark}[Mini-batches, step-size shift, and burn-in]\label{rem:minibatch}
{Theorem~\ref{thm:sa-linear} is stated and proved for \(m=1\). For any fixed \(m\), suppose that the mini-batch observations are sampled independently with replacement and that their simulation draws are conditionally independent across contributions. The mini-batch direction \(\hat G_t(\theta)\) in \eqref{eq:sa-update} is then the average of \(m\) conditionally independent copies of the one-observation direction. It has the same conditional mean, and its conditional variance is \(1/m\) times the one-observation variance, yielding \(V_{m,\Rsim}=V_{1,\Rsim}/m\). Its moment and Lipschitz bounds follow from those of the individual contributions with adjusted constants. The same argument therefore applies with \(V_{1,\Rsim}\) replaced by \(V_{m,\Rsim}\).

{The same result covers a fixed symmetric positive-definite
preconditioner \(P\) in the update
\[
    \theta_t
    =
    \theta_{t-1}
    -
    \gamma_tP\hat G_t(\theta_{t-1}),
    \qquad
    \gamma_t=\gamma_0(t-1)^{-a}.
\]
In the centered coordinate
\(\phi=P^{-1/2}(\theta-\theta_0)\), the drift and innovation covariance
are \(P^{1/2}HP^{1/2}\) and \(P^{1/2}V_{m,\Rsim}P^{1/2}\),
respectively. Applying the theorem in this coordinate and mapping back
preserves the Polyak--Ruppert covariance
\(H^{-1}V_{m,\Rsim}H^{-1}\), with an adjusted small-step threshold.}

A fixed shift \(\gamma_t=\gamma_0(t-1+t_0)^{-a}\) also leaves the limits unchanged. For the averaged estimator, let \(T_0=T_0(T)\) be deterministic and satisfy \(T_0/T\to0\). The functional limit theorem then gives \(\{\sum_{t=1}^{T_0}(\theta_t-\theta_0)\}\mathbb I\{{\mathcal E}\}=o_{\mathbb P}(\sqrt T)\), and \(T/(T-T_0)\to1\), so the linear representation and functional limit remain unchanged in the same \(\epsilon\)-sense as Theorem~\ref{thm:sa-linear}.}
\end{remark}

{Many scalar parameters of interest are nonlinear transformations of \(\theta_0\), for example a particular substitution elasticity or choice probability in MNP. Let \(h\) be such a transformation and define \(\kappa_t=h(\theta_t)\) and \(\kappa_0=h(\theta_0)\).}

\begin{theorem}[Smooth transformations]\label{thm:transformation}\label{thnonlinear}
Under the conditions of Theorem~\ref{thm:sa-linear}, suppose in addition that Assumption~\ref{ass:transformation} holds. Then, in the same sense as in Theorem~\ref{thm:sa-linear}(iii),
{
\[
    \frac{1}{\sqrt{T}}\sum_{t=1}^{\lfloor Tr\rfloor}(\kappa_t-\kappa_0)
    \Rightarrow
    \nabla h(\theta_0)'\big(H^{-1}V_{1,\Rsim}H^{-1}\big)^{1/2}W(r),
    \qquad r\in[0,1].
\]
}
\end{theorem}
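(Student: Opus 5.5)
The argument is a delta-method reduction to Theorem~\ref{thm:sa-linear}, carried out on the event \(\mathcal E=\{\tau=\infty\}\). Throughout, take \(\gamma_0\leq\bar\gamma(\epsilon)\), so that \(\mathbb P(\mathcal E^c)\leq\epsilon+o(1)\) and conclusions (i)--(iii) of Theorem~\ref{thm:sa-linear} are available. Shrinking \(\rho\) if necessary, we may assume \(B_\rho\subset\mathcal N_0\). Then on \(\mathcal E\) every iterate \(\theta_t\) lies in \(B_\rho\), where Assumption~\ref{ass:transformation} applies. A second-order Taylor expansion of \(h\) around \(\theta_0\) gives
\[
    \kappa_t-\kappa_0=\nabla h(\theta_0)'(\theta_t-\theta_0)+r_t,
    \qquad
    |r_t|\leq \tfrac{C}{2}\|\theta_t-\theta_0\|^2 .
\]
Summing this identity yields
\[
    \frac{1}{\sqrt T}\sum_{t=1}^{\lfloor Tr\rfloor}(\kappa_t-\kappa_0)
    =\nabla h(\theta_0)'X_T(r)+R_T(r),
    \qquad
    \sup_{r\in[0,1]}|R_T(r)|\leq \frac{C}{2\sqrt T}\sum_{t=1}^{T}\|\theta_t-\theta_0\|^2 .
\]

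The map \(x\mapsto\nabla h(\theta_0)'x\) is continuous from \(D[0,1]^p\) to \(D[0,1]\). Hence, for any bounded continuous \(f\) on \(D[0,1]\), the composition \(f(\nabla h(\theta_0)'\,\cdot)\) is bounded and continuous on \(D[0,1]^p\). Theorem~\ref{thm:sa-linear}(iii) then gives
\[
    \limsup_{T\to\infty}\bigl|\mathbb E f(\nabla h(\theta_0)'X_T)-\mathbb E f(\nabla h(\theta_0)'X)\bigr|\leq 2\epsilon\sup|f|,
\]
and \(\nabla h(\theta_0)'X\) is exactly the stated limit. It remains to show that the remainder is negligible, namely
\[
    \sup_{r\in[0,1]}|R_T(r)|\,\mathbb I\{\mathcal E\}\xrightarrow{p}0 .
\]
Once this holds, a standard Slutsky-type argument finishes the proof. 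For bounded Lipschitz \(f\), the total error is at most \(2\epsilon\sup|f|\), plus a contribution from \(\mathcal E^c\) of at most \(2\sup|f|\,\mathbb P(\mathcal E^c)\), plus \(o(1)\). Absorbing constants by rescaling \(\epsilon\) recovers the bound in the same \(\epsilon\)-sense as Theorem~\ref{thm:sa-linear}(iii), and passing from bounded Lipschitz to bounded continuous \(f\) is the usual portmanteau step.

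The main obstacle is this remainder bound, which requires a mean-square rate for the iterates on \(\mathcal E\):
\[
    \mathbb E\bigl(\|\theta_t-\theta_0\|^2\,\mathbb I\{t\le\tau\}\bigr)\leq C\gamma_t+C\,\mathbb E\bigl(\|\theta_1-\theta_0\|^2\wedge\rho^2\bigr)\prod_{s\le t}(1-c\gamma_s).
\]
This should come from the stopped-process recursion already used for Theorem~\ref{thm:sa-linear}(i),(ii). Expanding \(\|\theta_t-\theta_0\|^2\) and using the drift inequality \((\theta-\theta_0)'\mathcal R(\theta)\geq c_0\|\theta-\theta_0\|^2\) on \(B_\rho\), together with the second-moment bounds of Assumption~\ref{ass:unbiased-scores}, gives
\[
    e_t\leq(1-2c_0\gamma_t)\,e_{t-1}+C\gamma_t^2,
    \qquad
    e_t:=\mathbb E\bigl(\|\theta_t-\theta_0\|^2\,\mathbb I\{t\le\tau\}\bigr),
\]
which yields \(e_t=O(\gamma_t)\) plus an exponentially vanishing initialization term. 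Since \(\mathcal E\subset\{t\le\tau\}\) for every \(t\), we have \(\|\theta_t-\theta_0\|^2\mathbb I\{\mathcal E\}\leq\|\theta_t-\theta_0\|^2\mathbb I\{t\le\tau\}\). Summing then gives
\[
    \mathbb E\Bigl(\mathbb I\{\mathcal E\}\,T^{-1/2}\sum_{t\le T}\|\theta_t-\theta_0\|^2\Bigr)
    \leq T^{-1/2}\Bigl(C\sum_{t\le T}t^{-a}+O(1)\Bigr)=O(T^{1/2-a})\to0,
\]
because \(a>1/2\); the \(O(1)\) term comes from the summable initialization term. If this quadratic remainder is already controlled inside the proof of Theorem~\ref{thm:sa-linear}, as it must be to make \(Z_T\) negligible given that \(\mathcal R\) is only locally linear, I would simply reuse that bound rather than rederive it.
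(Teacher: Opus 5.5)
Your proof is correct. Its core matches the paper's: a Taylor expansion on $\mathcal E$, with the quadratic remainder bounded in mean by the stopped rate $\mathbb E(\|\theta_t-\theta_0\|^2\mathbb I\{\tau>t\})\leq C_5\exp(-C_6t^{1-a})+C_5t^{-a}$ of Lemma~\ref{lem:rateconverg}, then Markov's inequality, giving $O(T^{-1/2}+T^{1/2-a})$.

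The difference is in how you transfer the weak limit. You use Theorem~\ref{thm:sa-linear}(iii) as a black box, composed with $x\mapsto\nabla h(\theta_0)'x$. That statement is only an $\epsilon$-approximate weak convergence, so there is no genuinely convergent sequence against which to absorb $R_T$ when $f$ is merely continuous. This is why you need the bounded-Lipschitz Slutsky step, followed by a portmanteau argument rerun in approximate form: closed sets pick up an additive $+c$, open sets $-c$, and a layer-cake integral then handles bounded continuous $f$. The constant is inflated by a fixed factor, which you absorb by invoking the theorem at a smaller $\epsilon$, i.e.\ a smaller $\bar\gamma$. All of this goes through.

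The paper instead re-enters the proof of Theorem~\ref{thm:sa-linear}. The Taylor bound together with Lemmas~\ref{lemlinear} and \ref{lemleading} gives a linear representation of the $\kappa$-partial sums on $\mathcal E$ in terms of the continued innovations. Lemma~\ref{lemfclt} supplies a genuine FCLT for those innovations. One split over $\mathcal E$ and $\mathcal E^c$ then handles every bounded continuous $f$ directly, keeping the constant $2\epsilon$ with the same threshold $\bar\gamma(\epsilon)$.

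Your route is more modular, since it needs only the statement of Theorem~\ref{thm:sa-linear} plus the rate bound. The paper's route is shorter. Also, because its limit is a genuine one with continuous paths, the extended continuous mapping theorem suffices there. You instead need $x\mapsto\nabla h(\theta_0)'x$ to be continuous everywhere on $D[0,1]^p$. That holds for $D([0,1],\mathbb R^p)$ with the Skorokhod or uniform topology, but not for the product of scalar Skorokhod topologies.

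One small repair: dominate $\mathbb I\{\mathcal E\}$ by $\mathbb I\{\tau>t\}$, not by $\mathbb I\{t\leq\tau\}$. With your indicator, the $t=1$ term is $\mathbb E\|\theta_1-\theta_0\|^2$, which need not be finite, because Assumption~\ref{ass:algorithm-stability} gives only consistency of the pilot. With $\mathbb I\{\tau>t\}$ the initial term is at most $\rho^2$, which is consistent with the $\wedge\rho^2$ in your displayed bound.
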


Theorem~\ref{thm:transformation} says that the scalar transformed path \(\kappa_t\) obeys the same functional limit as a linear functional of \(\theta_t\), so the path-based inference of Section~\ref{sec:inference} applies to \(\kappa_t\) directly, without a delta-method variance estimator.

\section{Inference}\label{sec:inference}

This section collects the inference consequences of
Theorems~\ref{thm:sa-linear} and \ref{thm:transformation}. Three features
of SAUSS shape it. {First, the recursion has two sources of
algorithmic variation, mini-batch selection and score simulation. Their
magnitude relative to sampling uncertainty from the data depends on \(n\),
\(T\), \(m\), and \(\Rsim\).} Second, random-scaling inference uses only
the solution path, so it does not require estimates of \(H\) or of the
simulation variance. {Third, many scalar objects of interest, such as a particular substitution elasticity or choice probability, are nonlinear transformations of \(\theta_0\), and the transformed path can be used directly.}

\subsection{Sampling and algorithmic uncertainty}\label{sec:inference-regimes}

{For a sample of size \(n\), the same recursion can be viewed as a
randomized algorithm targeting the exact sample MLE.} Let
\[
    S_n(\theta)=\frac{1}{n}\sum_{i=1}^{n}s_i(\theta),
\]
and let \(\hat\theta_n\) solve \(S_n(\hat\theta_n)=0\), suppressing standard existence qualifications. Under the usual likelihood expansion,
\[
    \sqrt{n}(\hat\theta_n-\theta_0)
    =
    H^{-1}
    \frac{1}{\sqrt{n}}
    \sum_{i=1}^{n}s_i(\theta_0)
    +
    o_p(1).
\]
{For a fixed sample, the exact sample estimator \(\hat\theta_n\) is the algorithmic target. The sampling and algorithmic errors satisfy the exact decomposition}
\[
{
    \sqrt{n}(\bar\theta_T-\theta_0)
    =
    \sqrt{n}(\hat\theta_n-\theta_0)
    +
    \sqrt{n}(\bar\theta_T-\hat\theta_n).
}
\]
{
\begin{remark}[Finite-run variance approximation]\label{rem:finite-run-variance}
For a finite sample and a finite number of stochastic approximation iterations, a useful first-order approximation is
\[
    \operatorname{Var}(\bar\theta_T-\theta_0)
    \approx
    \frac{1}{n}H^{-1}\Sigma_sH^{-1}
    +
    \frac{1}{T}H^{-1}V_{m,\Rsim}H^{-1}.
\]
{Equivalently, this approximation suggests}
\begin{equation}\label{eq:sample-algorithm-combined}
    \sqrt{n}(\bar\theta_T-\theta_0)
    \overset{\mathrm{approx.}}{\sim}
    \mathcal{N}
    \left(
        0,
        H^{-1}\Sigma_sH^{-1}
        +
        {\frac{n}{T}}H^{-1}V_{m,\Rsim}H^{-1}
    \right).
\end{equation}
Here \(\Sigma_s=\operatorname{Var}\{s_i(\theta_0)\}\). The first term represents sampling uncertainty in the exact sample estimator, while the second represents the additional uncertainty from terminating the randomized algorithm after \(T\) iterations. Thus, the sum can provide a more informative finite-run variance approximation than the conventional likelihood variance alone when \(T\) is not large relative to \(n\). By Remark~\ref{rem:minibatch}, \(V_{m,\Rsim}=V_{1,\Rsim}/m\), so the approximation also provides the reductions from larger mini-batches and additional accepted simulation draws.

A corresponding plug-in approximation is
\[
    \widehat{\operatorname{Var}}(\bar\theta_T)
    :=
    \frac{1}{n}\hat H^{-1}\hat\Sigma_s\hat H^{-1}
    +
    \frac{1}{T}\hat H^{-1}\hat V_{m,\Rsim}\hat H^{-1}.
\]
The fixed-sample implementation underlying this approximation uses iid sampling with replacement from the empirical distribution and fresh simulation draws at every score evaluation. When \(m(T-1)>n\), observations are sampled multiple times on average, and \(m(T-1)/n\) is the expected number of passes.
\end{remark}
}

\subsection{Random scaling and sandwich inference}\label{sec:inference-rs}

The functional central limit theorem in Theorem~\ref{thm:sa-linear} gives two routes to inference on the algorithmic target. Let \(A\) be an \(\ell\times p\) matrix with full row rank. For random-scaling inference, consider the null restriction \(A\theta_0=a_0\).

\begin{corollary}[Inference consequences of the FCLT]\label{cor:inference}
Under the FCLT in Theorem~\ref{thm:sa-linear}, with the fixed-\(m\)
extension in Remark~\ref{rem:minibatch} when \(m>1\), let
\(\bar\theta_T=T^{-1}\sum_{t=1}^{T}\theta_t\) and define
\[
    \widehat V_{\mathrm{RS}}(A)
    =
    \frac{1}{T^2}
    \sum_{t=1}^{T}
    \left\{
        \sum_{s=1}^{t} A(\theta_s-\bar\theta_T)
    \right\}
    \left\{
        \sum_{s=1}^{t} A(\theta_s-\bar\theta_T)
    \right\}'.
\]
{Under the null restriction \(A\theta_0=a_0\), when
\(\widehat V_{\mathrm{RS}}(A)\) is nonsingular, the random-scaling statistic}
\[
    T(A\bar\theta_T-a_0)'
    \widehat V_{\mathrm{RS}}(A)^{-1}
    (A\bar\theta_T-a_0)
\]
converges (in the sense of Theorem~\ref{thm:sa-linear}(iii)) to the standard self-normalized Brownian functional
\[
    W_\ell(1)'
    \left\{
        \int_0^1
        \bar W_\ell(r)\bar W_\ell(r)'dr
    \right\}^{-1}
    W_\ell(1),
    \qquad
    \bar W_\ell(r)=W_\ell(r)-rW_\ell(1).
\]
Alternatively, if \(\hat H\) consistently estimates \(H\) and \(\hat V_{m,\Rsim}\) consistently estimates \(V_{m,\Rsim}\), then
\[
    \hat H^{-1}\hat V_{m,\Rsim}\hat H^{-1}
    \xrightarrow{p}
    H^{-1}V_{m,\Rsim}H^{-1}.
\]
Consequently,
\[
    \frac{1}{T}
    \hat H^{-1}\hat V_{m,\Rsim}\hat H^{-1}
\]
is a sandwich estimator for the first-order algorithmic covariance of
\(\bar\theta_T\).
\end{corollary}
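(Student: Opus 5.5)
The corollary follows from Theorem~\ref{thm:sa-linear}(iii), used through the continuous mapping theorem. The only care needed is to carry the \(\epsilon\)-error through. Under the null, \(A\bar\theta_T-a_0=A(\bar\theta_T-\theta_0)\). Both the numerator and \(\widehat V_{\mathrm{RS}}(A)\) can be written in terms of the centered partial-sum process \(X_T\). Specifically, \(\sqrt T A(\bar\theta_T-\theta_0)=AX_T(1)\). For \(t=\lfloor Tr\rfloor\),
\[
\frac{1}{\sqrt T}\sum_{s=1}^{t}A(\theta_s-\bar\theta_T)=A\Big\{X_T(r)-\tfrac{\lfloor Tr\rfloor}{T}X_T(1)\Big\}.
\]
Hence \(\widehat V_{\mathrm{RS}}(A)=\int_0^1 A\{X_T(r)-\tfrac{\lfloor Tr\rfloor}{T}X_T(1)\}\{\cdot\}'dr\) up to a step-function discretization. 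That discretization differs from \(\int_0^1 A\bar X_T(r)\bar X_T(r)'dr\), where \(\bar X_T(r)=X_T(r)-rX_T(1)\), by \(O_p(T^{-1})\sup_r\|X_T(r)\|^2\). So the statistic equals \(\Phi(X_T)+o_p(1)\) with
\[
\Phi(x)=\{Ax(1)\}'\Big\{\int_0^1 A\bar x(r)\bar x(r)'A'dr\Big\}^{-1}Ax(1).
\]

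\(\Phi\) is continuous, in the Skorokhod topology, at every continuous path \(x\) for which the inner matrix is nonsingular. The limit \(X=(H^{-1}V_{1,\Rsim}H^{-1})^{1/2}W\) has continuous paths. Write \(\Lambda=A H^{-1}V H^{-1}A'\), which is positive definite because \(A\) has full row rank and \(V\) is positive definite by Assumption~\ref{ass:variance-limit}. Then \(AX\) has the same law as \(\Lambda^{1/2}W_\ell\). The matrix \(\int\bar W_\ell\bar W_\ell'\) is almost surely nonsingular, so the discontinuity set of \(\Phi\) has limit probability zero. The factor \(\Lambda^{1/2}\) cancels, so \(\Phi(X)\) equals the stated pivot \(W_\ell(1)'\{\int\bar W_\ell\bar W_\ell'\}^{-1}W_\ell(1)\) in distribution.

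The main obstacle is that Theorem~\ref{thm:sa-linear}(iii) is only an \(\epsilon\)-version of weak convergence, stated for bounded continuous \(f\) on \(D[0,1]^p\), while \(\Phi\) is neither bounded nor everywhere continuous. I would first transfer the \(\epsilon\)-bound to \(f\circ\Phi\) for bounded continuous \(f\) on \(\mathbb R\) via a portmanteau-type argument. Concretely, I would approximate \(f\circ\Phi\) by bounded continuous functionals: truncate using \(\Phi_K=\Phi\) on the set where \(\lambda_{\min}\) of the inner matrix is at least \(1/K\), and let \(K\to\infty\). The cost of the approximation is controlled by the limit probability that \(\lambda_{\min}<1/K\), which tends to zero. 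The \(o_p(1)\) discretization term, which is small on \(\mathcal E\) by tightness, costs only a further \(\epsilon\)-order error. This gives \(\limsup|\mathbb Ef(\text{stat})-\mathbb Ef(\Phi(X))|\le 2\epsilon\sup|f|\), the sense of Theorem~\ref{thm:sa-linear}(iii). For \(m>1\), apply Remark~\ref{rem:minibatch} to replace \(V_{1,\Rsim}\) by \(V_{m,\Rsim}\); the pivot does not depend on it.

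The sandwich part is algebraic. Matrix inversion is continuous at the nonsingular \(H\), since \(\lambda_{\min}(H)>c\) by Assumption~\ref{ass:population-info}, so \(\hat H^{-1}\xrightarrow p H^{-1}\). Multiplication is continuous, so the continuous mapping theorem gives \(\hat H^{-1}\hat V_{m,\Rsim}\hat H^{-1}\xrightarrow p H^{-1}V_{m,\Rsim}H^{-1}\). By the endpoint statement of Theorem~\ref{thm:sa-linear}(iii), \(T\operatorname{Var}\) of the first-order term of \(\bar\theta_T\) equals \(H^{-1}V_{m,\Rsim}H^{-1}\). Dividing by \(T\) therefore yields the algorithmic covariance estimator.
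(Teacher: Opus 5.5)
Your proposal is correct and follows essentially the same route as the paper: write the numerator as \(AX_T(1)\) and \(\widehat V_{\mathrm{RS}}(A)\) through \(AX_T(t/T)-(t/T)AX_T(1)\), pass to the limit by a Riemann-sum and continuous-mapping argument that uses almost-sure nonsingularity of \(\int_0^1\bar W_\ell\bar W_\ell'\,dr\) so that \(\Lambda^{1/2}\) cancels, and obtain the sandwich limit by continuous mapping. Your explicit truncation argument for carrying the \(\epsilon\)-error through the unbounded functional is more careful than the paper's one-line ``same \(\epsilon\)-sense'' remark. As written, however, it gives a constant multiple of \(\epsilon\sup|f|\) rather than exactly \(2\epsilon\sup|f|\), because the truncation and discretization steps each pay an extra \(\epsilon\). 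This is harmless after rescaling \(\bar\gamma(\epsilon)\). It can also be sharpened by applying the ordinary continuous mapping theorem to the coupled process \(\widetilde X_T=-H^{-1}\widetilde M_T\) from the proof of Theorem~\ref{thm:sa-linear}, paying \(2\epsilon\sup|f|\) only once for \(\mathcal E^c\).
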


{Under population sampling, the target in Corollary~\ref{cor:inference} is \(\theta_0\). Conditional on a fixed sample, path variation instead describes the remaining randomized-algorithm variation around the sample target \(\hat\theta_n\) and does not by itself account for sampling uncertainty. Moreover, \(\widehat V_{\mathrm{RS}}(A)\) is a self-normalizer with a nondegenerate Brownian-functional limit, not a consistent estimator of \(AH^{-1}V_{m,\Rsim}H^{-1}A'\).}

Random scaling uses the solution path and follows the online-inference logic developed in \citet{leeLiaoSeoShin2022randomScaling} and \citet{leeLiaoSeoShin2025quantile}. Sandwich inference is closer to conventional MLE inference and is useful when the algorithm is run long enough that computational uncertainty is intended to be small; the MNP form of \(V_{m,\Rsim}\) and its plug-in estimator are given next.

\paragraph{The MNP form of \(V_{m,\Rsim}\) and a plug-in estimator.}
For the MNP specialization, \(V_{m,\Rsim}\) can be written directly in
terms of accepted-draw loss-gradient contributions. {For an
accepted draw \(\nu\) of the differenced error vector defined relative to
alternative 1, define}
\[
{
    \Gamma_i(\theta,\nu)
    :=
    \begin{pmatrix}
        -\DX_i^\top Q\nu \\
        -\mathcal{C}_{\theta_L}
        \{Q(\nu\nu^\top-\Omega)QL;L\}
    \end{pmatrix},
    \qquad
    \Omega:=LL^\top,
    \qquad
    Q:=\Omega^{-1}.
}
\]
Let \(\nu_i^\star(\theta_0)\) denote one accepted draw from the conditional distribution of the canonical error vector given \((\DX_i,y_i)\) at \(\theta_0\). Then
\[
    g_i(\theta_0)
    =
    \mathbb{E}
    \{
        \Gamma_i(\theta_0,\nu_i^\star(\theta_0))
        \mid
        \DX_i,y_i
    \},
\]
and
\begin{equation}\label{eq:mnp-vmr}
    V_{m,\Rsim}^{\mathrm{MNP}}
    =
    \frac{1}{m}
    \operatorname{Var}
    \{g_i(\theta_0)\}
    +
    \frac{1}{m\Rsim}
    \mathbb{E}
    \left[
        \operatorname{Var}
        \{
            \Gamma_i(\theta_0,\nu_i^\star(\theta_0))
            \mid
            \DX_i,y_i
        \}
    \right].
\end{equation}
Equivalently, if
\[
    \widehat g_i^{\Rsim}(\theta_0)
    =
    \frac{1}{\Rsim}
    \sum_{r=1}^{\Rsim}
    \Gamma_i(\theta_0,\nu_{ir}^\star(\theta_0)),
\]
then
\begin{equation}\label{eq:mnp-vmr-total}
    V_{m,\Rsim}^{\mathrm{MNP}}
    =
    \frac{1}{m}
    \operatorname{Var}
    \{
        \widehat g_i^{\Rsim}(\theta_0)
    \}.
\end{equation}
A direct plug-in estimator can be built from independent accepted-draw loss-gradient calls at {a consistent estimate \(\hat\theta\), where \(\hat\theta\xrightarrow{p}\theta_0\)}. Generate \(B\) independent \(\Rsim\)-accepted calls
\[
    \widehat g_{ib}^{\Rsim}(\hat\theta)
    =
    \frac{1}{\Rsim}
    \sum_{r=1}^{\Rsim}
    \Gamma_i(\hat\theta,\hat\nu_{ibr}^\star),
    \qquad
    b=1,\ldots,B,
\]
set
\[
    \bar g
    =
    \frac{1}{nB}
    \sum_{i=1}^{n}
    \sum_{b=1}^{B}
    \widehat g_{ib}^{\Rsim}(\hat\theta),
\]
and use
\begin{equation}\label{eq:mnp-vmr-plugin}
    \widehat V_{m,\Rsim}^{\mathrm{MNP}}
    =
    \frac{1}{m}
    \frac{1}{nB-1}
    \sum_{i=1}^{n}
    \sum_{b=1}^{B}
    \left\{
        \widehat g_{ib}^{\Rsim}(\hat\theta)-\bar g
    \right\}
    \left\{
        \widehat g_{ib}^{\Rsim}(\hat\theta)-\bar g
    \right\}'.
\end{equation}
{For fixed \(m\), \(\Rsim\), and \(B\), under iid sampling,
fresh conditionally independent evaluation calls, and standard continuity
and local uniform-law conditions at
\(\hat\theta\xrightarrow{p}\theta_0\), the estimator in
\eqref{eq:mnp-vmr-plugin} is consistent for
\(V_{m,\Rsim}^{\mathrm{MNP}}\) as \(n\to\infty\). Calls sharing the same
observation \(i\) are conditionally independent given \(D_i\), rather than
unconditionally independent.}
With \(B=1\), this estimator uses one independent loss-gradient call per observation and estimates the total one-step variance in \eqref{eq:mnp-vmr-total}. {Larger \(B\) reduces Monte Carlo noise in estimating the sandwich input, and \(B\geq2\) is required to report the within- and between-observation decomposition in \eqref{eq:mnp-vmr}.}

\subsection{Nonlinear transformations}\label{sec:inference-nonlinear}

{Let the scalar parameter of interest be \(\kappa_0=h(\theta_0)\) for a smooth map \(h\) satisfying Assumption~\ref{ass:transformation}, and suppose \(\nabla h(\theta_0)\neq0\) so that its first-order limit is nondegenerate. Rather than estimating \(\nabla h(\hat\theta)\) and applying the delta method, we track the transformed path \(\kappa_t=h(\theta_t)\) inside the recursion and apply random scaling to \(\kappa_t\) directly. Theorem~\ref{thm:transformation} shows that the partial-sum process of \(\kappa_t-\kappa_0\) has the limit \(\nabla h(\theta_0)'(H^{-1}V_{1,\Rsim}H^{-1})^{1/2}W(r)\). Define
\[
    \bar\kappa_T=\frac{1}{T}\sum_{t=1}^T\kappa_t,
    \qquad
    \widehat V_{\mathrm{RS}}^\kappa
    =
    \frac{1}{T^2}\sum_{t=1}^T
    \left\{\sum_{s=1}^t(\kappa_s-\bar\kappa_T)\right\}^2.
\]
The scalar version of Corollary~\ref{cor:inference} therefore gives the self-normalized statistic
\[
    \frac{T(\bar\kappa_T-\kappa_0)^2}
    {\widehat V_{\mathrm{RS}}^\kappa}.
\]
}

\section{Monte Carlo Experiments}\label{sec:monte-carlo}

In this section we examine the finite-sample performance of the SAUSS estimator for the multinomial probit
model. We use the accept--reject simulator of Section~\ref{sec:mnp-sauss} with the covariance parameterization
\(\Omega=LL'\), identified by the normalization \(L_{11}=1\). A sequential accept--reject variant produced
similar results and we do not report it separately.

\subsection{Design and Implementation}\label{sec:mc-design}

The baseline design, labeled D1, is a multinomial probit model written in utility differences relative to
alternative~1. For individual \(i=1,\ldots,n\) and alternatives \(j=2,\ldots,J\),
\begin{equation}\label{eq:mc-dgp}
    U_{ij}-U_{i1} = (x_{ij}-x_{i1})'\beta_0+\nu_{ij},
    \qquad
    \nu_i=(\nu_{i2},\ldots,\nu_{iJ})'\sim \mathcal{N}(0,\Omega_0),
\end{equation}
and the observed choice \(y_i\) is the alternative with the largest latent utility. Each alternative has \(K=3\)
covariates drawn independently from the standard normal distribution, and the true coefficient vector is
\(\beta_0=(-0.5,1,1)'\). The differenced errors are equicorrelated: \(\Omega_0\) has unit diagonal and all
off-diagonal elements equal to \(0.3\). The number of alternatives varies over \(J\in\{4,8,16,32,64\}\), so that
the dimension of \(\Omega_0\) ranges from \(3\times 3\) to \(63\times 63\). The sample size is fixed at
\(n=2000\) throughout, and each design cell is replicated \(1000\) times. We therefore focus on computational scaling with the number of alternatives and on finite-sample performance across different Monte Carlo designs.

The implementation follows Section~\ref{sec:mnp-sauss}. At each iteration a mini-batch of \(m=10\) observations
is drawn from the sample independently with replacement, and the accept--reject simulator is run until
\(\Rsim=5\) accepted draws are obtained for each observation in the mini-batch, subject to a trial cap of
\(\Kmax=20{,}000\). If the simulator obtains no accepted draw for an observation within the cap, that
observation's contribution to the mini-batch gradient is omitted from the update.
{Let \(s=1,\ldots,80{,}000\) index the stochastic updates. The block-specific learning rates are \(\gamma_{\beta,s}=0.5s^{-0.501}\) and
\(\gamma_{L,s}=0.2s^{-0.501}\); the exponent lies in the interval \((1/2,1)\) required for the asymptotic
optimality of the averaged iterate \citep{polyak1992acceleration}. We use Polyak--Ruppert averaging to
construct the reported estimates.} The tuning constants were fixed in a small
pilot study at \(J=4\) and then held constant across all designs and all values of \(J\); a tuning sweep at
\(J=64\), not reported here, produced stable results in a neighborhood of these values.
Table~\ref{tab:mc-design} summarizes the design.

\begin{table}[t]
\centering
\caption{Monte Carlo design}
\label{tab:mc-design}
\small
\begin{tabular}{@{}ll@{}}
\toprule
Model & Multinomial probit in utility differences, equation~\eqref{eq:mc-dgp} \\
Alternatives & \(J\in\{4,8,16,32,64\}\) (D1); \(J=4\) (D2, D3) \\
Covariates & \(K=3\), \(x_{ijk}\stackrel{iid}{\sim}\mathcal{N}(0,1)\) \\
Coefficients & D1: \(\beta_0=(-0.5,1,1)'\); D2, D3: \(\beta_0=(-1.5,2,2)'\) \\
Error covariance & D1: unit diagonal, equicorrelation \(0.3\); D2, D3: equation~\eqref{eq:mc-omega-d2} \\
Identification & \(\Omega=LL'\), \(L_{11}=1\) \\
Sample size & \(n=2000\) \\
Replications & \(1000\) per design cell \\
Mini-batch size & \(m=10\), sampled i.i.d.\ with replacement \\
Accepted draws & \(\Rsim=5\) per observation, trial cap \(\Kmax=20{,}000\) \\
Step size & \(0.5s^{-0.501}\) (\(\beta\)), \(0.2s^{-0.501}\) (\(L\)) \\
Updates & \(80{,}000\), with Polyak--Ruppert averaging \\
\bottomrule
\end{tabular}
\end{table}

We evaluate the estimates with three criteria. {Although the normalization \(L_{11}=1\) fixes the scale of the
parameters, we report criteria that are invariant to the scale normalization so that comparisons with
implementations adopting other normalizations remain transparent.} First, the
coefficient criterion measures the recovery of the direction of \(\beta_0\). Writing
\(\tilde\beta=\beta/\|\beta\|\) and letting \(\hat\beta_b\) denote the estimate in replication
\(b=1,\ldots,B\),
\[
    \mathrm{RMSE}_{\beta}
    =
    \Bigl\{ B^{-1}\textstyle\sum_{b=1}^B \|\tilde{\hat\beta}_b-\tilde\beta_0\|^2 \Bigr\}^{1/2}.
\]
Second, the correlation criterion measures the recovery of the error dependence structure. Let \(C(\Omega)\)
denote the correlation matrix implied by \(\Omega\) and \(d=J-1\); then
\[
    \mathrm{RMSE}_{C}
    =
    \Bigl\{ \tfrac{2}{Bd(d-1)}\textstyle\sum_{b=1}^B\sum_{k<\ell}
    \bigl(C_{k\ell}(\hat\Omega_b)-C_{k\ell}(\Omega_0)\bigr)^2 \Bigr\}^{1/2}.
\]
Third, the probability criterion measures out-of-sample fit of the implied choice probabilities. For each \(J\)
we fix a test design of \(N_{\mathrm{test}}=5000\) covariate draws from the distribution above, independent of
all estimation samples, and approximate the choice probabilities \(P_{qj}(\theta)\) at any parameter value by
frequency simulation with \(S_{\mathrm{eval}}=5000\) common random draws. The criterion is
\[
    \mathrm{RMSE}_{P}
    =
    \Bigl\{ \tfrac{1}{BN_{\mathrm{test}}J}\textstyle\sum_{b=1}^B\sum_{q=1}^{N_{\mathrm{test}}}\sum_{j=1}^J
    \bigl(\hat P_{qj}(\hat\theta_b)-\hat P_{qj}(\theta_0)\bigr)^2 \Bigr\}^{1/2},
\]
where the same test design and the same evaluation draws are used for all replications and all estimators within
a given \(J\), so that simulation noise in the evaluation step is common across the comparison. Because average
choice probabilities scale as \(1/J\), levels of \(\mathrm{RMSE}_P\) are not comparable across different values
of \(J\); we therefore use \(\mathrm{RMSE}_P\) only for comparisons within a given \(J\).

\subsection{Main Results}\label{sec:mc-crossj}

{We benchmark SAUSS against simulated maximum likelihood based on the Geweke--Hajivassiliou--Keane (GHK) simulator at \(J=4\), where the conventional estimator is computationally reliable \citep{borschSupanHajivassiliou1993,hajivassiliou1996simulation,Train2009}. The simulated log-likelihood is maximized numerically. We use the implementation in the R package \texttt{mlogit} \citep{croissant2020mlogit}, with \(100\) GHK draws per observation and the package's default starting values, optimizer, and convergence tolerances. Both estimators are applied to the same \(1{,}000\) simulated data sets and use the same normalization \(L_{11}=1\). For the larger choice sets, we report SAUSS alone to examine its computational and statistical scaling as \(J\) increases.}

\begin{table}[t]
\centering
\caption{SAUSS performance across \(J\) and comparison with GHK-based simulated maximum likelihood at \(J=4\) (design D1)}
\label{tab:mc-ghk}
\small
\setlength{\tabcolsep}{4pt}
\begin{tabular}{rlcccr}
\toprule
\(J\) & Estimator & \(\mathrm{RMSE}_{\beta}\) & \(\mathrm{RMSE}_{C}\) & \(\mathrm{RMSE}_{P}\) & Median time (s) \\
\midrule
4  & SAUSS   & 0.0205 & 0.0997 & 0.0120 & 0.5 \\
   & GHK-SML & 0.0202 & 0.0939 & 0.0122 & 71.8 \\
\midrule
8  & SAUSS   & 0.0171 & 0.2010 & 0.0121 & 2.6 \\
\midrule
16 & SAUSS   & 0.0159 & 0.2864 & 0.0101 & 9.8 \\
\midrule
32 & SAUSS   & 0.0148 & 0.3077 & 0.0075 & 43.4 \\
\midrule
64 & SAUSS   & 0.0140 & 0.3091 & 0.0056 & 80.5 \\
\bottomrule
\end{tabular}
\par
\begin{minipage}{0.95\textwidth}
\vspace{1ex}
\footnotesize
\emph{Notes:} GHK-SML is \texttt{mlogit} with \texttt{probit=TRUE} and \(100\) GHK draws at default settings. At \(J=4\), both estimators are applied to the same \(1{,}000\) simulated data sets.
\end{minipage}
\end{table}

Consider first the behavior of SAUSS across \(J\). The coefficient criterion improves as \(J\) grows with \(n\)
fixed, from \(0.0205\) at \(J=4\) to \(0.0140\) at \(J=64\): each observation contributes a comparison among
\(J\) alternatives, so the information about the direction of \(\beta_0\) per observation increases with \(J\).
{The correlation criterion moves in the opposite direction, consistent with the growing number of covariance
parameters relative to the fixed sample size: the number of free correlation parameters, \(d(d-1)/2\), grows
quadratically in \(J\).} The acceptance rate of the accept--reject simulator declines steeply with \(J\),
though somewhat more slowly than the rate \(1/J\) at the larger values, from \(0.281\) at \(J=4\) to \(0.033\)
at \(J=64\), as the observed-choice region occupies a shrinking share of the error space. {The
declining acceptance rate increases simulation effort, but no-acceptance events remain rare. Even at \(J=64\),
the simulator failed to produce an accepted draw in only \(143\) of the \(800{,}000\) observation-level
evaluations per replication, less than \(0.02\) percent.} {Computation time increases with \(J\), but the median is only \(80.5\) seconds per replication at \(J=64\).}

{At \(J=4\), the two estimators have similar accuracy and the
absolute differences across the three criteria are at most \(0.0058\).
Under the reported configurations, SAUSS has a median runtime of
approximately \(0.5\) seconds, compared with \(71.8\) seconds for GHK-SML,
a factor of about \(140\). In this small-\(J\) setting, SAUSS delivers
similar aggregate accuracy to the conventional estimator. The larger
choice sets provide computational
stress tests with an unrestricted covariance parameterization. In
particular, the \(J=64\) design demonstrates computational feasibility
rather than precise recovery of the full covariance matrix, with the
tuning constants held fixed across \(J\).}

\subsection{Heterogeneous Covariance and a Rare Alternative}\label{sec:mc-robust}

{The equicorrelated design is favorable to the accept--reject simulator because no choice region is unusually
small. We therefore consider two additional designs at \(J=4\) that probe robustness beyond this baseline.} Design D2 increases the
coefficients to \(\beta_0=(-1.5,2,2)'\), making choices more deterministic conditional on covariates, and
replaces \(\Omega_0\) with the matrix
\begin{equation}\label{eq:mc-omega-d2}
    \Omega_0^{\mathrm{D2}}
    =
    \begin{pmatrix}
        1.00 & 0.85 & 0.75 \\
        0.85 & 1.40 & 1.05 \\
        0.75 & 1.05 & 2.00
    \end{pmatrix},
\end{equation}
which has unequal variances and large, uneven correlations. Design D3 augments D2 with alternative-specific
intercepts \(\delta_0=(1.5,0.5,-2.0)'\) in the utility differences, so that alternative~4 is chosen with low
probability: its population choice share is \(9.2\) percent, against \(41\) percent for the most popular
alternative. The intercepts are estimated by absorbing them into the covariate matrix as alternative dummies,
with no change to the estimator, and we report the direction criterion separately for the slope and intercept
blocks.

\begin{table}[t]
\centering
\caption{SAUSS performance under the baseline and stress designs at \(J=4\)}
\label{tab:mc-robust}
\small
\setlength{\tabcolsep}{4pt}
\begin{tabular}{lcccccc}
\toprule
Design & \(\mathrm{RMSE}_{\beta}\) & \(\mathrm{RMSE}_{\delta}\) & \(\mathrm{RMSE}_{C}\) & \(\mathrm{RMSE}_{P}\)
& Acceptance rate & Median time (s) \\
\midrule
D1 & 0.0205 & --     & 0.0997 & 0.0120 & 0.281 & 0.5 \\
D2 & 0.0138 & --     & 0.1341 & 0.0157 & 0.414 & 0.6 \\
D3 & 0.0144 & 0.0514 & 0.1586 & 0.0205 & 0.453 & 0.5 \\
\bottomrule
\end{tabular}
\par
\begin{minipage}{0.95\textwidth}
\vspace{1ex}
\footnotesize
\emph{Notes:} \(J=4\), \(n=2000\), \(1000\) replications, tuning as in Table~\ref{tab:mc-design}.
D1 is the equicorrelated baseline. \(\mathrm{RMSE}_{\delta}\) is the direction criterion for the intercept block
(D3 only). The rare alternative in D3 has a \(9.2\) percent choice share.
\end{minipage}
\end{table}

{Table~\ref{tab:mc-robust} compares the two stress designs with the \(J=4\) baseline. The overall acceptance
rates in D2 and D3, \(0.414\) and \(0.453\), exceed the baseline rate of \(0.281\). D3 nevertheless creates
substantial heterogeneity in computational difficulty: the acceptance rate for observations choosing the rare
alternative is \(0.26\), roughly two-fifths of the rate for the most frequently chosen alternative~2
(\(0.66\)). The simulator remains effective: on average about two observation evaluations per replication
produced no accepted draw, out of \(800{,}000\), and the trial cap was reached fewer than five times per
replication. The accuracy measures remain in the same broad range across the three designs. The coefficient
criterion is smaller in D2 and D3 than at the baseline, whereas the correlation and probability criteria are
somewhat larger. Because the data-generating processes differ, these comparisons should be interpreted as
robustness diagnostics rather than as a ranking of the designs. Overall, SAUSS continues to perform well under
the heterogeneous covariance structure and the rare-alternative design.}

\section{Application: Maternal Labor-Supply Intentions}
\label{sec:application}

We revisit the maternal labor-supply model of
\citet{boneva2026}, which asks whether respondents' beliefs about the
consequences of maternal employment predict the labor-supply choice that
they or their partner would make. Column 2 of its Table~4 fits a
multinomial probit to the stated choices of $N = 2{,}873$ respondents among
$J = 3$ alternatives, which are `not working', `part-time', and `full-time', respectively. In this scenario,
full-time daycare is available. Utility is linear in five
alternative-specific belief and norm measures and six case-specific
demographic controls, so the utility-differenced model carries $K = 19$
parameters. The published estimates use GHK-based simulated maximum
likelihood with 2{,}000 draws.

{We re-estimate that column with SAUSS by setting 1{,}600 epochs, mini-batches
of $m = 10$, five accepted simulator draws per observation,
Polyak--Ruppert averaging, and a $t^{-0.501}$ step-size decay. We
compare it with GHK simulated maximum likelihood.
{Because the published \texttt{cmmprobit} estimates in \texttt{Stata} use the alternative normalization \(L_{11}=\sqrt{2}\), we divide their coefficients and standard errors by \(\sqrt{2}\) to express them on the \(L_{11}=1\) scale used by \texttt{mlogit} and SAUSS (both in \texttt{R}). On this common scale, our 2{,}000-draw \texttt{mlogit} GHK coefficients differ from the published \texttt{cmmprobit} coefficients by at most \(0.02\) published standard errors.}}

\begin{table}[t]
\centering
\caption{Multinomial probit estimates, daycare scenario of
\citet{boneva2026}, Table~4, column~2; $N = 2{,}873$. Standard errors in
parentheses. Times are one fit on a single core.}
\label{tab:bgr}
\begin{tabular}{lcc}
\toprule
 & GHK-SML & SAUSS \\
\midrule
Child skills & $\underset{(0.1684)}{0.4553}$ & $\underset{(0.1809)}{0.4641}$ \\
Family outcomes & $\underset{(0.1876)}{1.1614}$ & $\underset{(0.1993)}{1.2722}$ \\
Maternal earnings (1{,}000 EUR) & $\underset{(0.0014)}{0.0033}$ & $\underset{(0.0016)}{0.0039}$ \\
Family's opinion & $\underset{(0.0365)}{0.2689}$ & $\underset{(0.0378)}{0.2872}$ \\
Friends' opinion & $\underset{(0.0388)}{0.2616}$ & $\underset{(0.0399)}{0.2843}$ \\
\addlinespace[3pt]
$\Omega_{21}$ & $\underset{(0.0946)}{-0.0022}$ & $\underset{(0.1000)}{-0.0418}$ \\
$\Omega_{22}$ & $\underset{(0.1379)}{0.4494}$ & $\underset{(0.1764)}{0.5840}$ \\
\midrule
Computation time (s) & 4180 & 33 \\
\bottomrule
\end{tabular}
\end{table}

{Table~\ref{tab:bgr} reports the five
alternative-specific coefficients emphasized in the original paper,
together with the error covariance and computation time.}
{{The five reported coefficients are broadly
similar across the two methods at the stated computational budgets. Each
SAUSS estimate lies within one GHK standard error of the corresponding GHK
estimate, and the signs and 5\% significance classifications agree. The SAUSS
coefficients are somewhat larger, accompanied by a fitted $\Omega_{22}$ of
$0.58$, compared with $0.45$ for GHK. This covariance difference is less than
one standard error under either fit, although the continued downward movement
of the SAUSS estimate near the end of the run indicates some finite-run
sensitivity at the reported budget.}}

{The SAUSS standard errors use the finite-run plug-in variance
approximation in Remark~\ref{rem:finite-run-variance} and are computed from
the reported run in $0.7$ seconds. After $4.6$ million observation-level
gradient evaluations, the estimated algorithmic component accounts for
$0.16\%$ of the total variance in this calculation, so the estimated
variance is dominated by sampling uncertainty. Across ten random seeds,
the coefficient estimates vary by only three to five percent of one
reported standard error, indicating limited seed sensitivity for $\beta$
at this budget.} {One caution concerns estimating $H$ using
the outer product of simulated scores. For fixed $\Rsim$, this outer product
estimates $\Sigma_s$ plus a simulation-variance component of order
$1/\Rsim$. Using it in place of $H$ therefore understates the standard
errors: in this application, the resulting standard errors are, in median,
$0.56$ times the simulation-adjusted plug-in standard errors at $\Rsim=5$
and $0.96$ times them at $\Rsim=200$. After accounting for simulation
variance, the plug-in standard errors are essentially stable across these
values of $\Rsim$.}

{The methods differ sharply in cost under these reported configurations:
$33$ seconds for SAUSS compared with $4{,}180$ seconds for the GHK likelihood.}
Some of that gap is
implementation rather than method, and a practitioner content with $200$
draws would face a much smaller one, since the GHK fit then takes $163$
seconds and moves by at most $0.16$ standard errors. {Because
$J=3$, this application provides a comparison in a small-choice-set setting
where both methods are computationally feasible.}

\section{Conclusion}\label{sec:conclusion}

This paper develops SAUSS, a stochastic approximation approach to
likelihood-based estimation when unbiased simulated scores are available
but simulated probabilities are costly or biased inside logarithms. The
main argument is that the method-of-simulated-scores construction, which
was difficult to combine with deterministic full-sample optimization
because of discontinuity, is well suited to mini-batch stochastic
approximation. {The MNP implementation shows how the general
LDV score identity can be turned into an explicit descent algorithm by
differencing relative to alternative 1 and imposing the \(L_{11}=1\) scale
normalization, with
accept--reject diagnostics separating the ideal unbiased recursion from
capped finite computation.}

{The numerical results show that SAUSS achieves accuracy comparable to
GHK-based simulated maximum likelihood in the small-choice-set benchmark
and remains computationally feasible as the number of alternatives
increases. In the empirical application, SAUSS produces broadly similar
estimates with substantially lower computation time under the reported
implementations.}

\appendix

\section{Proofs}\label{app:proof-roadmap}

\subsection{Setup and notation}

For the proofs of Theorems~\ref{thcflc} and \ref{thnonlinear} and
Lemmas~\ref{lem1}--\ref{lem:rateconverg}, the mini-batch size is \(m=1\),
so that the direction in \eqref{eq:sa-update} is
\begin{equation}\label{equpdate}
\hat G_t(\theta)=\frac{1}{\Rsim}\sum_{r=1}^{\Rsim} g(Y_{t,r}^*(\theta), D_t, \theta),
\qquad
\theta_t = \theta_{t-1} -\gamma_t \hat G_t(\theta_{t-1}),
\qquad t\geq2,
\end{equation}
where $D_t$ is the observation drawn at iteration $t$ and $Y_{t,r}^*(\theta)$ is the $r$-th accepted latent draw of the exact simulator at parameter $\theta$ (so that $g(Y_{t,r}^*(\theta),D_t,\theta)$ is the one-draw loss-gradient $g_i(\theta,U)$ of Section~\ref{sec:asymptotic-theory} for $i$ the observation drawn at $t$). As in Section~\ref{sec:asymptotic-theory}, $G_t(\theta)=\mathbb E_t^*\hat G_t(\theta)$ is the expectation over the simulation draws given $\mathcal F_{t-1}$ and $D_t$, $\mathcal R(\theta)=\mathbb E(G_t(\theta)|\mathcal F_{t-1})$, $H(\theta)=\nabla_\theta\mathcal R(\theta)$, $H=H(\theta_0)$, and $\mathcal R(\theta_0)=0$. Observations are drawn independently across iterations (streaming or with-replacement sampling), so $\mathcal R$ does not depend on $t$. The proof of Corollary~\ref{cor:inference} invokes the fixed-\(m\) extension in Remark~\ref{rem:minibatch}.

For those proofs, the assumptions are those of Section~\ref{sec:asymptotic-theory} (Assumptions~\ref{ass:population-info}--\ref{ass:variance-limit} for Theorem~\ref{thcflc}, plus Assumption~\ref{ass:transformation} for Theorem~\ref{thnonlinear}) with $m=1$; in particular {$q=\max\{2+\delta,2p_0\}$} with $p_0>(1-a)^{-1}$ is the moment order of Assumption~\ref{ass:unbiased-scores}, $A_t(\theta)=A_t(\theta,D_t)=Var^*_{|t}(g(Y_{t,r}^*(\theta),D_t,\theta))$ is the conditional simulation covariance of Assumption~\ref{ass:sim-variance}, where $Var^*_{|t}$ denotes the covariance with respect to the simulation draws given $\mathcal F_{t-1}$ and $D_t$, and $S=V_{1,\Rsim}$ is the limit in Assumption~\ref{ass:variance-limit}.

{As in the main text, all statements indexed by the run
length are taken along a sequence in which $T\to\infty$ and
$n_0=n_0(T)\to\infty$, with no relative-rate restriction; \(p\),
\(\Rsim\), \(\gamma_0\), and \(a\) are held fixed.}

To display the main argument before its technical components,
Section~\ref{app:main-proofs} gives the proofs of
Theorems~\ref{thcflc} and \ref{thnonlinear} and
Corollary~\ref{cor:inference}; Section~\ref{app:supporting-lemmas}
then states and proves Lemmas~\ref{lem1}--\ref{lem:rateconverg},
which those proofs invoke.

\subsection{Proofs of the main results}\label{app:main-proofs}

\begin{proof}[Proof of Theorem~\ref{thcflc}]
For $t\geq2$, we have
\begin{equation}
    \theta_{t} = \theta_{t-1} -\gamma_t \frac{1}{\Rsim}\sum_{r=1}^{\Rsim} g(Y_{t,r}^*(\theta_{t-1}), D_t, \theta_{t-1}).
\end{equation}

For $t\geq2$, define
\begin{equation}\label{eqzeta}
    \zeta_t=  \frac{1}{\Rsim}\sum_{r=1}^{\Rsim} g(Y_{t,r}^*(\theta_{t-1}), D_t, \theta_{t-1}) -  G_t(\theta_{t-1}).
\end{equation}
{The term $\zeta_t$ is the simulation component of the martingale-difference innovation.} In addition, let
 \begin{equation}\label{eqxi}
     \xi_t =  G_t(\theta_{t-1})- \mathbb E(G_t(\theta_{t-1})|\mathcal F_{t-1}).
 \end{equation}
{The term $\xi_t$ is the observation-sampling component of the martingale-difference innovation.}

Recall that $H=\nabla_\theta\mathcal R(\theta_0)$.

Let
\begin{equation}\label{eqeta}
\eta_t=  \mathbb E(G_t(\theta_{t-1})|\mathcal F_{t-1})- H(\theta_{t-1}-\theta_0).
\end{equation}
{The term $\eta_t$ is the nonlinear remainder from linearizing the mean drift around $\theta_0$.}

From \eqref{eqzeta}--\eqref{eqeta}, we have
\begin{equation}
    \theta_{t} = \theta_{t-1} -\gamma_t H(\theta_{t-1}-\theta_0)-\gamma_t(\zeta_t+\xi_t+\eta_t).
\end{equation}
 Let
 $$
 \Delta_t= \theta_t-\theta_0.
 $$
 Then
\begin{equation}\label{eq8}
    \Delta_{t} =\Delta_{t-1} -\gamma_t H\Delta_{t-1}-\gamma_t(\underbrace{\zeta_t+\xi_t}_{\text{MDS--CLT}}+\underbrace{\eta_t}_{\text{nonlinearity}}).
\end{equation}

{Let $\tau$, $\rho$, and
$\mathcal E=\{\tau=\infty\}$ be as in Lemma~\ref{lem1}; by
Lemma~\ref{lem1}(ii),
$\liminf_T\mathbb P(\mathcal E)\geq1-\epsilon$ once
$\gamma_0\leq\bar\gamma(\epsilon)$.}

{Let $\widetilde\varepsilon_t$ be the continued innovation defined in Lemma~\ref{lemfclt}. Set $\widetilde\Delta_1^1=\Delta_1$ and, for $t\geq2$, define the auxiliary linear recursion
\begin{equation}\label{eq:continued-linear-recursion}
    \widetilde\Delta_t^1
    =
    (I-\gamma_tH)\widetilde\Delta_{t-1}^1
    -\gamma_t\widetilde\varepsilon_t.
\end{equation}}

The claimed statements follow from the three results below. The approximation remainders are established on $\mathcal E$: they are of the form $Z_T\mathbb I\{\mathcal E\}$ with $Z_T\mathbb I\{\mathcal E\}\xrightarrow{p}0$.

  Lemma \ref{lemlinear}:  $$
{\sup_{r\in[0,1]}\left\|\frac{1}{\sqrt{T}}\sum_{t=1}^{\lfloor Tr\rfloor}(\Delta_t-\widetilde\Delta_t^1)\right\|\mathbb I\{\mathcal E\}=o_{\mathbb P}(1)}
$$

  Lemma \ref{lemfclt}:
$$
{-H^{-1}\widetilde M_T(r)\Rightarrow (H^{-1}SH^{-1})^{1/2}W(r)}
$$

  Lemma \ref{lemleading}:

$$
{\sup_{r\in[0,1]}\left\|\frac{1}{\sqrt{T}}\sum_{t=1}^{\lfloor Tr\rfloor}\widetilde\Delta_t^1+H^{-1}\widetilde M_T(r)\right\|=o_{\mathbb P}(1).}
$$

{Let $\widetilde X_T(r)=-H^{-1}\widetilde M_T(r)$. By Lemma~\ref{lemfclt}, $\widetilde X_T\Rightarrow X$. On $\mathcal E$, the continued innovations used in $\widetilde M_T$ equal the actual innovations $\zeta_t+\xi_t$ at every iteration. Lemmas~\ref{lemlinear} and \ref{lemleading} therefore give
\[
    \sup_{r\in[0,1]}\|X_T(r)-\widetilde X_T(r)\|\mathbb I\{\mathcal E\}=o_{\mathbb P}(1).
\]
Consequently, for every bounded continuous $f$ on $D[0,1]^p$,
\[
\begin{aligned}
|\mathbb Ef(X_T)-\mathbb Ef(\widetilde X_T)|
&\leq \mathbb E\{|f(X_T)-f(\widetilde X_T)|\mathbb I\{\mathcal E\}\}
+2\sup|f|\mathbb P(\mathcal E^c),
\end{aligned}
\]
whose $\limsup$ is at most $2\epsilon\sup|f|$. This proves the stated $\epsilon$-form FCLT; consistency and the linear representation follow from Lemmas~\ref{lem1}, \ref{lemlinear}, and \ref{lemleading}.}
\end{proof}

\begin{proof}[Proof of Theorem~\ref{thnonlinear}]
{Recall that $\kappa_t=h(\theta_t)$ and $\kappa_0=h(\theta_0)$, where $h$ is scalar-valued. Let $\tau$, $u_t$, and $\mathcal E=\{\tau=\infty\}$ be as in Lemma~\ref{lem1}. On $\mathcal E$, $\theta_t\in B_\rho\subset\mathcal N_0$ for every $t$. Because $B_\rho$ is convex, the line segment joining $\theta_t$ and $\theta_0$ also lies in $B_\rho$ on this event. Hence, for some $\tilde{\theta}_t$ on that line segment,}
{
$$
\begin{aligned}
&\left|\frac{1}{\sqrt{T}}\sum_{t=1}^{[Tr]}(\kappa_t-\kappa_0)- \frac{1}{\sqrt{T}}\sum_{t=1}^{[Tr]}\nabla h(\theta_0)'\Delta_t\right|\mathbb I\{\mathcal E\} \\
&\qquad\leq \frac{1}{2\sqrt{T}}\sum_{t=1}^{[Tr]}\|\Delta_t\|^2\|\nabla^2 h(\widetilde\theta_t)\|\mathbb I\{\mathcal E\}
\leq \frac{C}{\sqrt{T}}\sum_{t=1}^{[Tr]}\|\Delta_t\|^2\mathbb I\{\tau>t\}.
\end{aligned}
$$
}
by Assumption~\ref{ass:transformation}. By Lemma \ref{lem:rateconverg},
\begin{align*}
&\mathbb E\left( \sup_{r\in[0,1]}\frac{C}{\sqrt{T}}\sum_{t=1}^{[Tr]}\|\Delta_t\|^2\mathbb I\{\tau>t\}\right)\\
&\qquad\leq \frac{C}{\sqrt{T}}\sum_{t=1}^{T}u_t\\
&\qquad\leq \frac{C}{\sqrt{T}}\sum_{t=1}^{\infty}\exp(-C_6t^{1-a})
+\frac{C}{\sqrt{T}}\sum_{t=1}^{T}t^{-a}
= O(T^{-1/2}+T^{1/2-a}).
\end{align*}
Let $R_T(r)$ denote the difference on the left-hand side of the preceding Taylor bound. Since $\mathbb I\{\mathcal E\}\leq\mathbb I\{\tau>t\}$ for every $t$, Markov's inequality gives
\[
{
    \sup_{r\in[0,1]}|R_T(r)|\mathbb I\{\mathcal E\}=o_{\mathbb P}(1),
    \qquad
    \limsup_{T\to\infty}\mathbb P\!\left(\sup_{r\in[0,1]}|R_T(r)|>\delta\right)\leq\epsilon
}
\]
for every $\delta>0$, where the second conclusion also uses $\mathbb P(\mathcal E^c)\leq\epsilon+o(1)$ from Lemma \ref{lem1}(ii). Combining the stopped remainder bound with Lemmas \ref{lemlinear}--\ref{lemleading} yields
\[
{
\sup_{r\in[0,1]}
\left|
\frac{1}{\sqrt{T}}\sum_{t=1}^{[Tr]}(\kappa_t-\kappa_0)
+\frac{1}{\sqrt{T}}\sum_{t=2}^{[Tr]}\nabla h(\theta_0)'H^{-1}(\zeta_t+\xi_t)
\right|\mathbb I\{\mathcal E\}
=o_{\mathbb P}(1).
}
\]
{On $\mathcal E$, the actual innovations in the preceding display equal the continued innovations of Lemma~\ref{lemfclt}. That lemma and $\mathbb P(\mathcal E^c)\leq\epsilon+o(1)$ therefore give the claimed weak convergence in the same $\epsilon$-sense as Theorem~\ref{thcflc}.}
\end{proof}

\begin{proof}[Proof of Corollary~\ref{cor:inference}]
Let
\[
    \Sigma_{\mathrm{alg}}=H^{-1}V_{m,\Rsim}H^{-1},
    \qquad
    \Sigma_A=A\Sigma_{\mathrm{alg}}A'.
\]
The matrix \(\Sigma_A\) is positive definite because \(A\) has full row rank
and \(\Sigma_{\mathrm{alg}}\) is positive definite. The FCLT in
Theorem~\ref{thm:sa-linear}, together with Remark~\ref{rem:minibatch} for
fixed \(m\), gives
\[
    AX_T(\cdot)\Rightarrow \Sigma_A^{1/2}W_\ell(\cdot)
\]
in the same \(\epsilon\)-sense as that theorem. For every \(t=1,\ldots,T\),
\[
    \frac{1}{\sqrt T}\sum_{s=1}^t A(\theta_s-\bar\theta_T)
    =
    AX_T(t/T)-\frac{t}{T}AX_T(1).
\]
Because the limiting process has continuous paths, the continuous mapping
theorem and a Riemann-sum argument yield
\[
    \widehat V_{\mathrm{RS}}(A)
    \Rightarrow
    \Sigma_A^{1/2}
    \left\{
        \int_0^1\bar W_\ell(r)\bar W_\ell(r)'dr
    \right\}
    \Sigma_A^{1/2}.
\]
Under \(A\theta_0=a_0\),
\[
    \sqrt T(A\bar\theta_T-a_0)=AX_T(1)
    \Rightarrow \Sigma_A^{1/2}W_\ell(1).
\]
The integrated Brownian-bridge matrix is positive definite almost surely.
Another application of the continuous mapping theorem therefore gives the
stated self-normalized limit after the factors \(\Sigma_A^{1/2}\) cancel.
Finally, if \(\hat H\xrightarrow{p}H\) and
\(\hat V_{m,\Rsim}\xrightarrow{p}V_{m,\Rsim}\), then
\[
    \hat H^{-1}\hat V_{m,\Rsim}\hat H^{-1}
    \xrightarrow{p}
    H^{-1}V_{m,\Rsim}H^{-1}
\]
by the continuous mapping theorem, proving the stated sandwich convergence.
\end{proof}

\subsection{Supporting lemmas}\label{app:supporting-lemmas}

\begin{lemma}\label{lem1}\label{lem:stopped-bound}
Suppose Assumptions~\ref{ass:population-info}--\ref{ass:variance-limit} hold with $m=1$. There exist $\rho>0$ and $c_0>0$ such that, with
 $$
 \tau=\inf\{t\geq1: \|\Delta_t\|>\rho\},
 \qquad u_t=\mathbb E\left(\|\Delta_t\|^2 \mathbb I\{\tau>t\}\right),
 $$
the following hold for all $\gamma_0\leq\bar\gamma$, where $\bar\gamma>0$ depends only on $\rho$, $c_0$, and the constants in Assumptions~\ref{ass:population-info}--\ref{ass:variance-limit}:
\begin{enumerate}
\item[(i)] for all $t\geq2$,
\begin{equation}\label{eq81}
u_t\leq (1-2c_0\gamma_t+C\gamma_t^2)\,u_{t-1}+C\gamma_t^2 ;
\end{equation}
\item[(ii)] for every $\epsilon>0$, if in addition $\gamma_0\leq\bar\gamma(\epsilon)$, then
$$
{\limsup_{T\to\infty}\mathbb P(\tau<\infty)\leq\epsilon} ,
$$
that is, $\{\|\Delta_t\|\leq\rho\ \forall t\geq1\}$ holds with probability at least $1-\epsilon-o(1)$;
\item[(iii)] $u_1\leq C$, and, for all $t\geq2$,
\begin{equation}\label{eq16}
\mathbb E\left(\|\Delta_t\|^2 \mathbb I\{\tau>t\}\right)\leq C\gamma_t .
\end{equation}
\end{enumerate}
\end{lemma}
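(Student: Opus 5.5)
The plan rests on the stopped-process device. Since $\{\tau>t\}\subset\{\tau>t-1\}$ and $\{\tau>t-1\}$ is $\mathcal F_{t-1}$-measurable, I will work with $\|\Delta_t\|^2\mathbb I\{\tau>t-1\}$. This quantity dominates $\|\Delta_t\|^2\mathbb I\{\tau>t\}$. On $\{\tau>t-1\}$ the iterate $\theta_{t-1}$ lies in $B_\rho\subset\mathcal N_0$, so all local assumptions apply at $\theta_{t-1}$. Choose $\rho$ and $c_0$ from the local drift inequality stated after Assumption~\ref{ass:population-info}.

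\textbf{Step (i).} Expand
\[
\|\Delta_t\|^2=\|\Delta_{t-1}\|^2-2\gamma_t\Delta_{t-1}'\hat G_t(\theta_{t-1})+\gamma_t^2\|\hat G_t(\theta_{t-1})\|^2 .
\]
Take $\mathbb E(\cdot\mid\mathcal F_{t-1})$ on $\{\tau>t-1\}$. By unbiasedness (Assumption~\ref{ass:unbiased-scores}) and the tower property over simulation draws and then the observation, the cross term has conditional mean $-2\gamma_t\Delta_{t-1}'\mathcal R(\theta_{t-1})$, which is at most $-2c_0\gamma_t\|\Delta_{t-1}\|^2$. The moment bound in Assumption~\ref{ass:unbiased-scores}, with $q\ge2$ and Jensen applied to the average over $\Rsim$ draws, gives $\mathbb E(\|\hat G_t\|^2\mid\mathcal F_{t-1})\le C$. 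Multiplying by $\mathbb I\{\tau>t-1\}$, taking expectations, and using $\mathbb I\{\tau>t\}\le\mathbb I\{\tau>t-1\}$ yields
\[
u_t\le(1-2c_0\gamma_t)u_{t-1}+C\gamma_t^2 ,
\]
which is \eqref{eq81}; the extra $C\gamma_t^2u_{t-1}$ term is harmless. The small-step condition $\gamma_0\le\bar\gamma$ is used only to ensure $2c_0\gamma_t\le1$.

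\textbf{Step (iii).} First, $u_1\le\rho^2$ trivially. Second, $u_t\le C\gamma_t$ follows by the standard Chung-type induction. Suppose $u_{t-1}\le K\gamma_{t-1}$. Since $\gamma_{t-1}/\gamma_t=1+O(t^{-1})=1+o(\gamma_t)$ because $a<1$, step (i) gives
\[
u_t\le K\gamma_t\bigl\{1-2c_0\gamma_t+o(\gamma_t)\bigr\}+C\gamma_t^2 .
\]
The right side is at most $K\gamma_t$ once $K\ge C/c_0$ and $t\ge t^*$. The finitely many initial $t$ are covered by taking $K$ large, using $u_t\le\rho^2$. Unrolling the product also gives the refined bound $u_t\le Ce^{-C_6t^{1-a}}+Ct^{-a}$ used later.

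\textbf{Step (ii).} This is the main obstacle: $\mathbb P(\tau<\infty)$ must be controlled uniformly in time, not just at fixed $t$. Define
\[
V_t=\|\Delta_{t\wedge\tau}\|^2+C\sum_{s>t\wedge\tau}\gamma_s^2 .
\]
The conditional computation in step (i) shows that before $\tau$ the drift of $\|\Delta_t\|^2$ is at most $C\gamma_t^2$, so $V_t$ is a nonnegative supermartingale. Ville's maximal inequality on the event $\{\|\Delta_1\|\le b\}$ then gives
\[
\mathbb P(\tau<\infty)\le\mathbb P(\|\theta_1-\theta_0\|>b)+\frac{b^2+C\gamma_0^2\sum_s s^{-2a}}{\rho^2},
\]
since $\tau<\infty$ forces $V_\tau\ge\|\Delta_\tau\|^2>\rho^2$. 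One subtlety is that $\|\Delta_\tau\|$ may overshoot $\rho$. This does not matter, because only the lower bound $\rho^2$ is needed, and $V$ is evaluated via the stopped process so integrability follows from the moment bound. Finally, choose $b=\rho\sqrt{\epsilon/2}$ and $\gamma_0$ small as in Remark~\ref{rem:epsilon}. Pilot consistency (Assumption~\ref{ass:local-initialization}\,/\,\ref{ass:algorithm-stability}) with $n_0(T)\to\infty$ makes the first term $o(1)$, which gives $\limsup_T\mathbb P(\tau<\infty)\le\epsilon$.
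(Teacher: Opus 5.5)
Your proposal is correct and follows the paper's proof. Part (i) comes from the same conditional drift bound on the $\mathcal F_{t-1}$-measurable event $\{\tau>t-1\}$, and part (ii) from the same stopped, compensated nonnegative supermartingale combined with Ville's maximal inequality and pilot consistency. The differences are minor. First, you bound $\mathbb E(\|\hat G_t(\theta_{t-1})\|^2\mid\mathcal F_{t-1})\le C$ directly, which removes the $C\gamma_t^2u_{t-1}$ term and the need for $C\bar\gamma\le 2c_0$ in the supermartingale step. Second, you obtain (iii) by a Chung-type induction rather than from the unrolled bound in Lemma~\ref{lem:rateconverg}, which keeps the lemma self-contained.
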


\begin{proof}
\textit{Step 1: local drift and remainder bounds.}
Since $\mathcal R(\theta_0)=0$ and $H(\cdot)$ is Lipschitz with constant $L$ on a neighborhood of $\theta_0$,
$$
\mathcal R(\theta)=\int_0^1H\big(\theta_0+s(\theta-\theta_0)\big)\,ds\,(\theta-\theta_0)
$$
for all $\theta$ in that neighborhood. Because $H$ is symmetric with $\lambda_{\min}(H)>c$, choose $\rho>0$ so small that $B_\rho=\{\theta:\|\theta-\theta_0\|\leq\rho\}$ lies in the neighborhood and $L\rho\leq c/2$. Then for all $\theta\in B_\rho$,
\begin{equation}\label{lowerbound}
(\theta-\theta_0)'\mathcal R(\theta)\geq (c-L\rho)\|\theta-\theta_0\|^2\geq c_0\|\theta-\theta_0\|^2,\qquad c_0:=c/2,
\end{equation}
and, using $\|H(\theta)\|\leq C$ on $B_\rho$,
\begin{equation}\label{lowerbound2}
\|\mathcal R(\theta)\|\leq C\|\theta-\theta_0\|,\qquad
\|\mathcal R(\theta)-H(\theta-\theta_0)\|\leq \tfrac{L}{2}\|\theta-\theta_0\|^2 .
\end{equation}
The second inequality in (\ref{lowerbound2}) is the bound $\|\eta_t\|\leq C\|\Delta_{t-1}\|^2$ used below whenever $\theta_{t-1}\in B_\rho$.

\textit{Step 2: one-step inequality.}
Recall (\ref{eq8}): $\Delta_t=\Delta_{t-1}-\gamma_t\{\mathcal R(\theta_{t-1})+\zeta_t+\xi_t\}$, with $\mathbb E(\zeta_t+\xi_t|\mathcal F_{t-1})=0$; here $\zeta_t+\xi_t=\frac{1}{\Rsim}\sum_{r=1}^{\Rsim} g(Y_{t,r}^*(\theta_{t-1}), D_t, \theta_{t-1})-\mathcal R(\theta_{t-1})$, and the conditional mean is zero because $(D_t,Y^*_{t,\cdot})$ is generated independently of $\mathcal F_{t-1}$ and $\theta_{t-1}$ is $\mathcal F_{t-1}$-measurable. Squaring,
$$
\mathbb E(\|\Delta_t\|^2|\mathcal F_{t-1})
=\|\Delta_{t-1}\|^2-2\gamma_t\Delta_{t-1}'\mathcal R(\theta_{t-1})
+\gamma_t^2\,\mathbb E\big(\|\mathcal R(\theta_{t-1})+\zeta_t+\xi_t\|^2\big|\mathcal F_{t-1}\big).
$$
By Assumption~\ref{ass:unbiased-scores} (with $q\geq2$) and the substitution rule for conditional expectations,
\begin{align*}
\mathbb E(\|\zeta_t+\xi_t\|^2\mid\mathcal F_{t-1})
&\leq
2\mathbb E\left(
\left\|
\frac{1}{\Rsim}\sum_r
g(Y_{t,r}^*(\theta_{t-1}),D_t,\theta_{t-1})
\right\|^2
\mathrel{}\middle|\mathrel{}
\mathcal F_{t-1}
\right)
+2\|\mathcal R(\theta_{t-1})\|^2 \\
&\leq C+C\|\Delta_{t-1}\|^2
\end{align*}
on $\{\theta_{t-1}\in B_\rho\}$, using (\ref{lowerbound2}). Hence, on the event $\{\tau>t-1\}=\{\theta_s\in B_\rho,\ s\leq t-1\}$, (\ref{lowerbound}) gives
\begin{equation}\label{eqbounmom}
\mathbb E(\|\Delta_t\|^2|\mathcal F_{t-1})\leq \|\Delta_{t-1}\|^2(1-2c_0\gamma_t+C\gamma_t^2)+C\gamma_t^2 \qquad\text{on }\{\tau>t-1\}.
\end{equation}
The event $\{\tau>t-1\}$ is $\mathcal F_{t-1}$-measurable. Multiplying (\ref{eqbounmom}) by $\mathbb I\{\tau>t-1\}$, using $\mathbb I\{\tau>t\}\leq\mathbb I\{\tau>t-1\}$ and taking expectations yields (\ref{eq81}), since $\mathbb E(\|\Delta_t\|^2\mathbb I\{\tau>t\})\leq \mathbb E\{\mathbb I\{\tau>t-1\}\mathbb E(\|\Delta_t\|^2|\mathcal F_{t-1})\}$ and $\mathbb P(\tau>t-1)\leq1$. This proves (i). Note that no conditioning on a trajectory-dependent event is involved: the drift inequality is applied only on the $\mathcal F_{t-1}$-measurable event $\{\tau>t-1\}$.

\textit{Step 3: the iterates stay in $B_\rho$ with high probability.}
Let $V_t=\|\Delta_{t\wedge\tau}\|^2$ and take $\bar\gamma$ such that $C\bar\gamma\leq 2c_0$, so that $1-2c_0\gamma_t+C\gamma_t^2\leq1$ for all $t\geq2$ when $\gamma_0\leq\bar\gamma$. On $\{\tau>t-1\}$, $V_t=\|\Delta_t\|^2$ and (\ref{eqbounmom}) gives $\mathbb E(V_t|\mathcal F_{t-1})\leq V_{t-1}+C\gamma_t^2$; on $\{\tau\leq t-1\}$, $V_t=V_{t-1}$. Hence
$$
N_t:=V_t+C\sum_{s>t}\gamma_s^2
$$
is a nonnegative supermartingale with respect to $\mathcal F_t$. Fix $\epsilon>0$ and let $B=\{\|\Delta_1\|\leq\rho\sqrt{\epsilon/2}\}\in\mathcal F_1$; {$\mathbb P(B)\to1$ as $n_0=n_0(T)\to\infty$ because $\theta_1=\widetilde\theta_{n_0}\xrightarrow{p}\theta_0$.} On $B$, $N_1\leq \rho^2\epsilon/2+C\gamma_0^2\sum_{s\geq1}s^{-2a}\leq\rho^2\epsilon$ once $\gamma_0\leq\bar\gamma(\epsilon)$. Since $\{\tau<\infty\}\subset\{\sup_tV_t>\rho^2\}\subset\{\sup_tN_t\geq\rho^2\}$, Ville's maximal inequality for nonnegative supermartingales gives
$$
\mathbb P(\tau<\infty\mid\mathcal F_1)\leq \frac{N_1}{\rho^2}\leq\epsilon\qquad\text{on }B,
$$
so $\mathbb P(\tau<\infty)\leq\epsilon+\mathbb P(B^c)=\epsilon+o(1)$. This proves (ii).

\textit{Step 4.} Part (iii) follows from (i) and Lemma \ref{lem:rateconverg} below, whose proof uses only the recursion (\ref{eq81}), since $\exp(-C_6t^{1-a})\leq Ct^{-a}$.
\end{proof}

\begin{remark}
{The smallness condition on $\gamma_0$ can instead be ensured by using $\gamma_t=\gamma_0(t-1+t_0)^{-a}$ with $t_0$ sufficiently large. This shift makes $1-2c_0\gamma_t+C\gamma_t^2\leq 1$ from the first stochastic update onward.} Parts (i) and (iii) do not use the symmetry of $H$ beyond (\ref{lowerbound}); for likelihood problems $H$ is the population information matrix and is symmetric.
\end{remark}

\begin{lemma}\label{lemlinear}\label{lem:pr-linearization}
{Let $\mathcal E=\{\tau=\infty\}$ be as in Lemma~\ref{lem1}, and let $\widetilde\Delta_t^1$ satisfy \eqref{eq:continued-linear-recursion}. Then
\[
\sup_{r\in[0,1]}
\left\|
\frac{1}{\sqrt T}
\sum_{t=1}^{\lfloor Tr\rfloor}
(\Delta_t-\widetilde\Delta_t^1)
\right\|
\mathbb I\{\mathcal E\}
=o_{\mathbb P}(1).
\]}
\end{lemma}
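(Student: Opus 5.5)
The plan is to compare the actual recursion with the auxiliary linear recursion through their difference $\delta_t=\Delta_t-\widetilde\Delta_t^1$, and to reduce the problem to a weighted sum of the nonlinear remainders $\eta_t$. On $\mathcal E$ the actual innovations $\zeta_t+\xi_t$ coincide with the continued innovations $\widetilde\varepsilon_t$ at every iteration. Subtracting \eqref{eq:continued-linear-recursion} from \eqref{eq8} therefore gives, on $\mathcal E$,
\[
\delta_t=(I-\gamma_tH)\delta_{t-1}-\gamma_t\eta_t,\qquad \delta_1=0 .
\]
So $\delta_t$ is driven only by the linearization remainder. By Step 1 of Lemma~\ref{lem1}, $\|\eta_t\|\leq C\|\Delta_{t-1}\|^2$ whenever $\theta_{t-1}\in B_\rho$. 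Since $\mathbb I\{\mathcal E\}\leq\mathbb I\{\tau>t-1\}$, I will work with the stopped remainders $\bar\eta_t=\eta_t\mathbb I\{\tau>t-1\}$ and the deterministic-matrix recursion they generate. This process agrees with $\delta_t$ on $\mathcal E$.

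Next I will unroll the recursion in the standard Polyak--Juditsky form. Write $\Pi_{s,t}=\prod_{k=s+1}^{t}(I-\gamma_kH)$. Then $\delta_t=-\sum_{s=2}^t\Pi_{s,t}\gamma_s\bar\eta_s$, and exchanging the order of summation gives
\[
\sum_{t=1}^{\lfloor Tr\rfloor}\delta_t=-\sum_{s=2}^{\lfloor Tr\rfloor}\Big(\gamma_s\sum_{t=s}^{\lfloor Tr\rfloor}\Pi_{s,t}\Big)\bar\eta_s .
\]
Because $H$ is symmetric with $\lambda_{\min}(H)>c$ and $\gamma_t\to0$, for $t$ large enough we have $\|I-\gamma_tH\|\leq1-c\gamma_t$. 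Finitely many early factors only contribute a bounded constant. Hence $\|\Pi_{s,t}\|\leq C\exp(-c\sum_{k=s+1}^t\gamma_k)$. The standard Polyak--Juditsky lemma for $\gamma_t=\gamma_0(t-1)^{-a}$ with $a\in(1/2,1)$ then yields $\sup_{s\leq n\leq T}\|\gamma_s\sum_{t=s}^{n}\Pi_{s,t}\|\leq C$ uniformly in $s$ and $n$.

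Consequently
\[
\sup_{r\in[0,1]}\Big\|\frac{1}{\sqrt T}\sum_{t=1}^{\lfloor Tr\rfloor}\delta_t\Big\|\mathbb I\{\mathcal E\}\leq\frac{C}{\sqrt T}\sum_{s=2}^{T}\|\Delta_{s-1}\|^2\mathbb I\{\tau>s-1\}.
\]
The supremum over $r$ is absorbed because the bound is a sum of nonnegative terms. Taking expectations and using Lemma~\ref{lem1}(iii), $\mathbb E(\|\Delta_{s-1}\|^2\mathbb I\{\tau>s-1\})\leq C\gamma_{s-1}$ together with $u_1\leq C$, the right side has expectation at most
\[
CT^{-1/2}\Big(1+\sum_{s\leq T}s^{-a}\Big)=O\big(T^{-1/2}+T^{1/2-a}\big)\to0,
\]
since $a>1/2$. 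Markov's inequality then gives the claim.

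I expect the main obstacle to be the uniform bound on the weights $\gamma_s\sum_{t\geq s}\Pi_{s,t}$. It is uniform in both the start index $s$ and the truncation point $\lfloor Tr\rfloor$. For that step I would compare $\sum_{t\geq s}\exp(-c\sum_{k=s+1}^t\gamma_k)$ with an integral, giving a bound of order $1/\gamma_s$. The contraction $\|I-\gamma_kH\|\leq1-c\gamma_k$ could fail for small $k$ if $\gamma_0\|H\|$ is large, so these finitely many early indices will be handled separately by a crude bound. A secondary point to verify is that replacing $\eta_t$ by its stopped version is legitimate, that is, that $\delta_t$ and its stopped analogue coincide on $\mathcal E$. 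This holds because $\mathcal E\subset\{\tau>t-1\}$ for every $t$.
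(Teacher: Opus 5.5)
Your proposal is correct and follows essentially the same route as the paper. The paper also derives the exact identity $\sum_{t=1}^k(\Delta_t-\widetilde\Delta_t^1)=-\sum_{j=2}^k K_{j,k}\eta_j$ on $\mathcal E$, with $K_{j,k}=\gamma_j\sum_{t=j}^k\Phi_{j+1,t}$; this is your weight $\gamma_s\sum_t\Pi_{s,t}$. It then bounds these weights uniformly with the Polyak--Juditsky estimates, uses $\|\eta_j\|\leq C\|\Delta_{j-1}\|^2$ on $\{\tau>j-1\}$, and concludes via Lemma~\ref{lem1}(iii) and Markov's inequality. The only difference is cosmetic: the paper writes $K_{j,k}=H^{-1}+w_j^k$ and cites $\sup\|w_j^k\|\leq C$, whereas you bound $K_{j,k}$ directly by the contraction-plus-integral-comparison argument you sketch.
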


\begin{proof}
{For integers $j\leq t$, define the deterministic transition matrices
\[
    \Phi_{j,t}:=\prod_{\ell=j}^{t}(I-\gamma_\ell H),
    \qquad
    \Phi_{j,t}:=I\quad\text{when }j>t.
\]
For $k\geq1$, set
\begin{equation}\label{eq:pr-finite-weights}
\begin{aligned}
B_k
&:=
\left(I+\sum_{t=2}^k\Phi_{2,t}\right)\Delta_1,
\\
K_{j,k}
&:=
\gamma_j\sum_{t=j}^k\Phi_{j+1,t},
\qquad
w_j^k:=K_{j,k}-H^{-1},
\qquad 2\leq j\leq k,
\end{aligned}
\end{equation}
and set $B_0:=0$. The standard product bounds for $\gamma_t=\gamma_0(t-1)^{-a}$ and positive definite $H$ give
\begin{equation}\label{eq:pr-weight-bounds}
    \sup_{k\geq1}\|B_k\|\leq C\|\Delta_1\|,
    \qquad
    \sup_{2\leq j\leq k}\|w_j^k\|\leq C,
    \qquad
    \sum_{j=2}^k\|w_j^k\|\leq Ck^a.
\end{equation}
These are the usual Polyak--Ruppert weight bounds; see Lemma~2 of \cite{polyak1992acceleration} and Lemma~2 of \cite{zhu2021constructing}.

On $\mathcal E$, Lemma~\ref{lemfclt} gives $\widetilde\varepsilon_j=\zeta_j+\xi_j$ for every $j$. Iterating \eqref{eq8} and \eqref{eq:continued-linear-recursion}, which have the same initial condition, therefore yields the exact finite-horizon identity
\begin{equation}\label{eq:pr-nonlinear-difference}
    \sum_{t=1}^k(\Delta_t-\widetilde\Delta_t^1)
    =
    -\sum_{j=2}^kK_{j,k}\eta_j
    =
    -\sum_{j=2}^k(H^{-1}+w_j^k)\eta_j
    \qquad\text{on }\mathcal E.
\end{equation}
On $\{\tau>j-1\}$, \eqref{lowerbound2} gives $\|\eta_j\|\leq C\|\Delta_{j-1}\|^2$. Hence \eqref{eq:pr-weight-bounds} and \eqref{eq:pr-nonlinear-difference} imply
\begin{equation}\label{eq:pr-nonlinear-bound}
\sup_{r\in[0,1]}
\frac{1}{\sqrt T}
\left\|
\sum_{t=1}^{\lfloor Tr\rfloor}(\Delta_t-\widetilde\Delta_t^1)
\right\|
\mathbb I\{\mathcal E\}
\leq
\frac{C}{\sqrt T}
\sum_{j=2}^{T}
\|\Delta_{j-1}\|^2\mathbb I\{\tau>j-1\}.
\end{equation}
{Because the initialization may depend on $T$, we use a
finite-horizon probability bound. By Lemma~\ref{lem1}(iii),
\[
\begin{aligned}
&\mathbb E\left[
\frac{C}{\sqrt T}
\sum_{j=2}^{T}
\|\Delta_{j-1}\|^2
\mathbb I\{\tau>j-1\}
\right]
\\
&\qquad\leq
\frac{C}{\sqrt T}
\left(
u_1+C\sum_{j=3}^{T}\gamma_{j-1}
\right)
\\
&\qquad=
O(T^{-1/2})+O(T^{1/2-a})
=o(1),
\end{aligned}
\]
where the last equality uses
$\sum_{j=3}^{T}\gamma_{j-1}=O(T^{1-a})$ and $a>1/2$.
Markov's inequality therefore implies that the right-hand side of
\eqref{eq:pr-nonlinear-bound} converges to zero in probability.}}
\end{proof}

\begin{lemma}[MDS-FCLT] \label{lemfclt}\label{lem:martingale-fclt}
{For $t\geq2$, define
\[
\varepsilon_t(\theta)
:=
\frac{1}{\Rsim}\sum_{r=1}^{\Rsim}
g(Y_{t,r}^*(\theta),D_t,\theta)
-\mathbb E\{G_t(\theta)\mid\mathcal F_{t-1}\}.
\]
On an enlarged probability space, let $\varepsilon_t^0$ be a fresh innovation generated at $\theta_0$ that, conditional on $\mathcal F_{t-1}$, has the same distribution as $\varepsilon_t(\theta_0)$ and is independent of the actual iteration-$t$ draws. Continue to denote the enlarged filtration by $\mathcal F_t$, and define
\[
\widetilde\varepsilon_t
:=
\mathbb I\{\tau>t-1\}\varepsilon_t(\theta_{t-1})
+\mathbb I\{\tau\leq t-1\}\varepsilon_t^0,
\qquad
\widetilde M_T(r)
:=
\frac{1}{\sqrt T}\sum_{t=2}^{\lfloor Tr\rfloor}\widetilde\varepsilon_t.
\]
Then
\[
    -H^{-1}\widetilde M_T(r)
    \Rightarrow
    (H^{-1}SH^{-1})^{1/2}W(r)
    \quad\text{in }D[0,1]^p,
\]
where $S=V_{1,\Rsim}$ and $W$ is a $p$-dimensional standard Wiener process. In particular, $\widetilde M_T(1)\xrightarrow{d}\mathcal N(0,S)$. Moreover, on $\mathcal E=\{\tau=\infty\}$,
\[
    \widetilde\varepsilon_t
    =\varepsilon_t(\theta_{t-1})
    =\zeta_t+\xi_t
    \qquad\text{for every }t\geq2.
\]}
\end{lemma}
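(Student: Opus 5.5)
The plan is to prove Lemma~\ref{lemfclt} by applying a standard multivariate martingale functional central limit theorem (e.g., Hall and Heyde, Theorem~4.2, or the Cramér--Wold device combined with the scalar martingale invariance principle) to the array $T^{-1/2}\widetilde\varepsilon_t$. Once $\widetilde M_T\Rightarrow S^{1/2}W$ in $D[0,1]^p$ is established, the claim for $-H^{-1}\widetilde M_T$ follows from the continuous mapping theorem, because $-H^{-1}S^{1/2}W$ has the same law as $(H^{-1}SH^{-1})^{1/2}W$.

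The first step is to check that $\widetilde\varepsilon_t$ is a martingale difference sequence. The indicators $\mathbb I\{\tau>t-1\}$ and $\mathbb I\{\tau\le t-1\}$ are $\mathcal F_{t-1}$-measurable. On the first event, $\mathbb E\{\varepsilon_t(\theta_{t-1})\mid\mathcal F_{t-1}\}=0$ by the substitution argument of Lemma~\ref{lem1}, Step~2: $(D_t,Y^*_{t,\cdot})$ is independent of $\mathcal F_{t-1}$ and the simulator is unbiased by Assumption~\ref{ass:unbiased-scores}. On the second event, $\varepsilon_t^0$ has conditional mean $\mathbb E\{\varepsilon_t(\theta_0)\mid\mathcal F_{t-1}\}=0$. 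The identity on $\mathcal E$ is immediate, since there $\tau>t-1$ for every $t$, and $\varepsilon_t(\theta_{t-1})=\zeta_t+\xi_t$ by the definitions \eqref{eqzeta}--\eqref{eqxi}.

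The second step is the conditional Lindeberg condition. The $q\ge 2+\delta$ moment bounds in Assumption~\ref{ass:unbiased-scores} hold uniformly over $\mathcal N_0\supset B_\rho$, and the continued innovation is always evaluated either at $\theta_{t-1}\in B_\rho$ or at $\theta_0$. Therefore $\mathbb E(\|\widetilde\varepsilon_t\|^{2+\delta}\mid\mathcal F_{t-1})\le C$. This gives the Lyapunov bound $T^{-1-\delta/2}\sum_t\mathbb E\|\widetilde\varepsilon_t\|^{2+\delta}\to0$, which implies Lindeberg.

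The main obstacle is the third step: showing that the conditional variance process converges, namely
\[
T^{-1}\sum_{t=2}^{\lfloor Tr\rfloor}\mathbb E(\widetilde\varepsilon_t\widetilde\varepsilon_t'\mid\mathcal F_{t-1})\xrightarrow{p} rS
\]
uniformly in $r$. I would first decompose the conditional covariance at a parameter $\theta$ as
\[
\Sigma_t(\theta)=\operatorname{Var}\{G_t(\theta)\mid\mathcal F_{t-1}\}+\Rsim^{-1}\mathbb E\{A_t(\theta)\mid\mathcal F_{t-1}\},
\]
using conditional independence of the $\Rsim$ fresh draws. On $\{\tau\le t-1\}$ the summand is exactly $\Sigma_t(\theta_0)$, and Assumption~\ref{ass:variance-limit} together with Cesàro averaging handles it. On $\{\tau>t-1\}$ I would bound $\|\Sigma_t(\theta_{t-1})-\Sigma_t(\theta_0)\|\le C\|\Delta_{t-1}\|$:
\begin{itemize}
\item the simulation part is controlled by the Lipschitz condition of Assumption~\ref{ass:sim-variance};
\item the sampling part is controlled by writing $G_t(\theta)G_t(\theta)'-G_t(\theta_0)G_t(\theta_0)'$ via the mean value theorem, applying Cauchy--Schwarz with the second-moment bound on $\nabla G_t$ from Assumption~\ref{ass:smooth-G} and the moment bound on $G_t$, and noting that $\mathcal R(\theta)\mathcal R(\theta)'$ is $O(\|\Delta\|)$ on $B_\rho$.
\end{itemize}
Lemma~\ref{lem1}(iii) then gives
\[
T^{-1}\sum_t\mathbb E\bigl(\|\Delta_{t-1}\|\,\mathbb I\{\tau>t-1\}\bigr)\le T^{-1}\sum_t (C\gamma_t)^{1/2}=O(T^{-a/2})\to0,
\]
so the perturbation is negligible by Markov's inequality, uniformly in $r$ because the bound is monotone in the partial sum.

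A subtle point is that Assumption~\ref{ass:variance-limit} is stated only as a plim along $t$. Because of the i.i.d. sampling, $\Sigma_t(\theta_0)$ is in fact deterministic and constant, equal to $S$, so the Cesàro step is trivial. I would note this explicitly, and otherwise invoke Toeplitz's lemma. With Lindeberg and variance convergence in hand, the martingale FCLT yields tightness and the finite-dimensional limit $S^{1/2}W$, which completes the proof.
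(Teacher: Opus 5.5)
Your proposal is correct and follows essentially the same route as the paper's proof: martingale-difference property from the $\mathcal F_{t-1}$-measurability of $\{\tau>t-1\}$, a conditional Lyapunov bound from the $(2+\delta)$ moments, and the predictable quadratic variation controlled through $\operatorname{Var}\{G_t\mid\mathcal F_{t-1}\}+\Rsim^{-1}\mathbb E\{A_t\mid\mathcal F_{t-1}\}$, Assumptions~\ref{ass:smooth-G} and \ref{ass:sim-variance}, and Lemma~\ref{lem1}(iii), followed by Hall--Heyde and premultiplication by $-H^{-1}$. The only cosmetic difference is the Ces\`aro step: you use the constancy of $\Sigma_t(\theta_0)$ under independent sampling together with an averaged $L^1$ perturbation bound. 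The paper instead combines $C_t(\theta_0)\xrightarrow{p}S$ with uniform integrability to get $\mathbb E\|\widetilde C_t-S\|\to0$ before averaging. Your fallback to Toeplitz's lemma is valid for the same reason, because the conditional covariances are uniformly bounded.
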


\begin{proof}
{Because $\{\tau>t-1\}\in\mathcal F_{t-1}$ and each component innovation has conditional mean zero,
\[
    \mathbb E(\widetilde\varepsilon_t\mid\mathcal F_{t-1})=0.
\]
Thus $\{\widetilde\varepsilon_t,\mathcal F_t\}$ is a martingale-difference sequence. Assumption~\ref{ass:unbiased-scores}, Jensen's inequality, and the fact that $\theta_{t-1}\in B_\rho\subset\mathcal N_0$ on $\{\tau>t-1\}$ give the uniform conditional moment bound
\begin{equation}\label{eq:continued-moment}
    \mathbb E(\|\widetilde\varepsilon_t\|^{2+\delta}\mid\mathcal F_{t-1})\leq C.
\end{equation}

For the predictable quadratic variation, write
\[
C_t(\theta)
:=
\mathbb E\{\varepsilon_t(\theta)\varepsilon_t(\theta)'\mid\mathcal F_{t-1}\}
=
\operatorname{Var}\{G_t(\theta)\mid\mathcal F_{t-1}\}
+\frac{1}{\Rsim}
\mathbb E\{A_t(\theta,D_t)\mid\mathcal F_{t-1}\},
\]
where $A_t(\theta,D_t)$ is the simulation covariance conditional on $\mathcal F_{t-1}$ and $D_t$. By construction,
\[
\widetilde C_t
:=
\mathbb E(\widetilde\varepsilon_t\widetilde\varepsilon_t'\mid\mathcal F_{t-1})
=
\mathbb I\{\tau>t-1\}C_t(\theta_{t-1})
+\mathbb I\{\tau\leq t-1\}C_t(\theta_0).
\]
Assumption~\ref{ass:smooth-G} and the integral mean-value formula imply
\[
\mathbb E\{\|G_t(\theta)-G_t(\theta_0)\|^2\mid\mathcal F_{t-1}\}
\leq C\|\theta-\theta_0\|^2
\qquad (\theta\in B_\rho).
\]
Together with the bounded second moments and Assumption~\ref{ass:sim-variance}, this yields
\[
\|C_t(\theta)-C_t(\theta_0)\|
\leq C\{\|\theta-\theta_0\|+\|\theta-\theta_0\|^2\}
\qquad (\theta\in B_\rho).
\]
Consequently, Lemma~\ref{lem1}(iii) and Cauchy--Schwarz give
\[
\begin{aligned}
\mathbb E\|\widetilde C_t-C_t(\theta_0)\|
&\leq
C\mathbb E\!\left[
\{\|\Delta_{t-1}\|+\|\Delta_{t-1}\|^2\}
\mathbb I\{\tau>t-1\}
\right]\\
&\leq C\{\gamma_{t-1}^{1/2}+\gamma_{t-1}\}
\longrightarrow0.
\end{aligned}
\]
Assumption~\ref{ass:variance-limit} gives $C_t(\theta_0)\xrightarrow{p}S$. The uniform conditional $(2+\delta)$-moment bound implies uniform integrability, so
\[
    \mathbb E\|\widetilde C_t-S\|\longrightarrow0.
\]
It follows by Ces\`aro summation that
\begin{equation}\label{eq:continued-pqv}
\sup_{r\in[0,1]}
\left\|
\frac{1}{T}\sum_{t=2}^{\lfloor Tr\rfloor}\widetilde C_t-rS
\right\|
\xrightarrow{p}0.
\end{equation}

For every $\eta>0$, \eqref{eq:continued-moment} gives the conditional Lindeberg bound
\begin{align*}
&\frac{1}{T}\sum_{t=2}^T
\mathbb E\!\left[
\|\widetilde\varepsilon_t\|^2
\mathbb I\{\|\widetilde\varepsilon_t\|>\eta\sqrt T\}
\mid\mathcal F_{t-1}
\right]\\
&\qquad\leq
\frac{1}{\eta^\delta T^{1+\delta/2}}
\sum_{t=2}^T
\mathbb E(\|\widetilde\varepsilon_t\|^{2+\delta}\mid\mathcal F_{t-1})
\leq
\frac{C}{\eta^\delta T^{\delta/2}}
\longrightarrow0.
\end{align*}
The martingale functional central limit theorem
{\citep[see, e.g.,][Theorem~4.2]{hallHeyde1980martingale}}, applied using
\eqref{eq:continued-pqv} and the conditional Lindeberg condition, gives
\[
    \widetilde M_T(r)\Rightarrow S^{1/2}W(r).
\]
Premultiplication by $-H^{-1}$ gives the stated functional limit. The final equality on $\mathcal E$ follows immediately from the definition of $\widetilde\varepsilon_t$.}
\end{proof}

\begin{lemma}\label{lemleading}
{Let $\widetilde\Delta_t^1$ satisfy \eqref{eq:continued-linear-recursion}. Then
\[
\sup_{r\in[0,1]}
\left\|
\frac{1}{\sqrt T}
\sum_{t=1}^{\lfloor Tr\rfloor}\widetilde\Delta_t^1
+H^{-1}\widetilde M_T(r)
\right\|
=o_{\mathbb P}(1).
\]}
\end{lemma}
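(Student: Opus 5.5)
The plan is to solve the linear recursion \eqref{eq:continued-linear-recursion} explicitly and sum it using the same transition matrices and Polyak--Ruppert weights as in the proof of Lemma~\ref{lemlinear}. Iterating from \(\widetilde\Delta_1^1=\Delta_1\) gives \(\widetilde\Delta_t^1=\Phi_{2,t}\Delta_1-\sum_{j=2}^t\Phi_{j+1,t}\gamma_j\widetilde\varepsilon_j\). Summing over \(t\leq k\) and exchanging the order of summation then yields the exact identity
\[
\sum_{t=1}^k\widetilde\Delta_t^1=B_k-\sum_{j=2}^kK_{j,k}\widetilde\varepsilon_j
=B_k-H^{-1}\sum_{j=2}^k\widetilde\varepsilon_j-\sum_{j=2}^k w_j^k\widetilde\varepsilon_j .
\]
Taking \(k=\lfloor Tr\rfloor\), the middle term is exactly \(-\sqrt T H^{-1}\widetilde M_T(r)\). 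It therefore suffices to show that
\[
\max_{k\leq T}\|B_k\|/\sqrt T=o_{\mathbb P}(1)
\qquad\text{and}\qquad
\max_{k\leq T}\Big\|\sum_{j=2}^k w_j^k\widetilde\varepsilon_j\Big\|/\sqrt T=o_{\mathbb P}(1).
\]

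The initial-condition term is immediate. By \eqref{eq:pr-weight-bounds}, \(\sup_k\|B_k\|\leq C\|\Delta_1\|\), and \(\Delta_1=O_{\mathbb P}(1)\) by pilot consistency (Assumption~\ref{ass:algorithm-stability}); indeed \(\Delta_1\to0\). Hence \(\max_k\|B_k\|/\sqrt T\to0\) in probability.

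The weighted martingale term is the main obstacle. The weights \(w_j^k\) depend on \(k\), so \(\sum_j w_j^k\widetilde\varepsilon_j\) is not a martingale in \(k\), and a uniform-in-\(k\) bound needs care. Fix \(k\) first. Since \(\widetilde\varepsilon_j\) is a martingale difference with bounded conditional second moments (by \eqref{eq:continued-moment}), orthogonality together with \eqref{eq:pr-weight-bounds} gives
\[
\mathbb E\Big\|\sum_{j=2}^k w_j^k\widetilde\varepsilon_j\Big\|^2\leq C\sum_{j}\|w_j^k\|^2\leq C\sup_j\|w_j^k\|\sum_j\|w_j^k\|\leq Ck^a .
\]
This already handles the fixed-\(r\) case, because \(T^{a}/T\to0\). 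For the supremum over \(k\leq T\), I would use the higher moments of Assumption~\ref{ass:unbiased-scores}, where \(q\geq 2p_0\) with \(p_0>(1-a)^{-1}\); this is exactly the role the paper assigns to the \(2p_0\) moments. By Burkholder's inequality,
\[
\mathbb E\Big\|\sum_j w_j^k\widetilde\varepsilon_j\Big\|^{2p_0}\leq C\Big(\sum_j\|w_j^k\|^2\Big)^{p_0}\leq Ck^{ap_0}.
\]
A union bound over \(k\leq T\) then gives
\[
\mathbb P\Big(\max_{k\leq T}\Big\|\sum_j w_j^k\widetilde\varepsilon_j\Big\|>\eta\sqrt T\Big)\leq \sum_{k\leq T}\frac{Ck^{ap_0}}{\eta^{2p_0}T^{p_0}}\leq C\eta^{-2p_0}T^{1-p_0(1-a)},
\]
which tends to zero precisely because \(p_0(1-a)>1\).

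If the crude union bound turns out to be borderline, the fallback is as follows. Write \(w_j^k=w_j^\infty+(w_j^k-w_j^\infty)\), control \(\max_k\|\sum_{j\leq k}w_j^\infty\widetilde\varepsilon_j\|\) by Doob's inequality, and note that the correction \(w_j^k-w_j^\infty=-\gamma_j\sum_{t>k}\Phi_{j+1,t}\) decays geometrically for \(j\) far below \(k\). This reduces the correction term to a window of length about \(k^a\), which is negligible relative to \(\sqrt T\). Combining the two pieces gives the stated uniform approximation.
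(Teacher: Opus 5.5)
Your proposal is correct and follows the paper's proof essentially step for step. It uses the same exact representation $\sum_{t\le k}\widetilde\Delta_t^1=B_k-H^{-1}\sum_{j}\widetilde\varepsilon_j-\sum_j w_j^k\widetilde\varepsilon_j$, the same bound on $B_k$ via pilot consistency, and the same Burkholder bound with $2p_0$ moments followed by a union bound over $k\le T$ that yields $O(T^{1-(1-a)p_0})$. Your extra fixed-$k$ $L^2$ computation and the Doob-based fallback are not needed, because the union-bound argument is not borderline under $p_0>(1-a)^{-1}$.
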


\begin{proof}
{Iterating \eqref{eq:continued-linear-recursion} and summing through an arbitrary integer $k\geq0$ gives the exact finite-horizon representation
\begin{equation}\label{eq:pr-leading-exact}
\sum_{t=1}^k\widetilde\Delta_t^1
=
B_k-\sum_{j=2}^kK_{j,k}\widetilde\varepsilon_j
=
-H^{-1}\sum_{j=2}^k\widetilde\varepsilon_j
+B_k-\sum_{j=2}^kw_j^k\widetilde\varepsilon_j,
\end{equation}
where $B_k$, $K_{j,k}$, and $w_j^k$ are defined in \eqref{eq:pr-finite-weights}; all sums are empty when their upper limit is below their lower limit. By \eqref{eq:pr-weight-bounds} and Assumption~\ref{ass:algorithm-stability},
\[
    \frac{1}{\sqrt T}\max_{0\leq k\leq T}\|B_k\|
    \leq
    \frac{C}{\sqrt T}\|\Delta_1\|
    =o_{\mathbb P}(1).
\]

Because $q\geq2p_0$ in Assumption~\ref{ass:unbiased-scores}, the same argument as for \eqref{eq:continued-moment} gives
\[
    \mathbb E(\|\widetilde\varepsilon_j\|^{2p_0}\mid\mathcal F_{j-1})\leq C.
\]
For each fixed $k$, Burkholder's inequality and \eqref{eq:pr-weight-bounds} therefore imply
\begin{align*}
\mathbb E\left\|
\sum_{j=2}^kw_j^k\widetilde\varepsilon_j
\right\|^{2p_0}
&\leq
C\mathbb E\left(
\sum_{j=2}^k\|w_j^k\|^2\|\widetilde\varepsilon_j\|^2
\right)^{p_0}\\
&\leq
C\left(\sum_{j=2}^k\|w_j^k\|^2\right)^{p_0}
\leq Ck^{ap_0}.
\end{align*}
Hence, for every $\delta>0$, a union bound and Markov's inequality give
\begin{align*}
&\mathbb P\left(
\max_{2\leq k\leq T}
\frac{1}{\sqrt T}
\left\|
\sum_{j=2}^kw_j^k\widetilde\varepsilon_j
\right\|>\delta
\right)\\
&\qquad\leq
\frac{C}{\delta^{2p_0}T^{p_0}}
\sum_{k=2}^Tk^{ap_0}
=
O\!\left(T^{1-(1-a)p_0}\right)
\longrightarrow0,
\end{align*}
because $p_0>(1-a)^{-1}$. Taking $k=\lfloor Tr\rfloor$ in \eqref{eq:pr-leading-exact} proves the result uniformly over $r\in[0,1]$.}
\end{proof}

\begin{lemma}\label{lem:rateconverg} Suppose $\gamma_t= \gamma_0(t-1)^{-a}$ for $t\geq2$, with $a\in(1/2,1)$, and let $u_t=\mathbb E(\|\Delta_t\|^2\mathbb I\{\tau>t\})$ be as in Lemma \ref{lem1}.
There are constants $C_5, C_6>0$ such that, for all $t\geq1$,
$$
u_t\leq C_5\exp(-C_6t^{1-a}) + C_5t^{-a}  .
$$
Consequently $\|\Delta_t\|\mathbb I\{\tau>t\}\xrightarrow{p}0$. {Under the additional small-step condition of Lemma \ref{lem1}(ii), for every $b>0$,
\[
\limsup_{t\to\infty}\mathbb P(\|\theta_t-\theta_0\|>b)\leq\epsilon.
\]}
\end{lemma}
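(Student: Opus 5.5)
We will prove the bound on \(u_t\) by unrolling the deterministic recursion \eqref{eq81} of Lemma~\ref{lem1}(i), which holds for all \(\gamma_0\leq\bar\gamma\). Since \(\gamma_t\to0\), there is a fixed \(t_1\) such that \(C\gamma_t^2\leq c_0\gamma_t\) for \(t\geq t_1\). For those \(t\), \eqref{eq81} becomes \(u_t\leq(1-c_0\gamma_t)u_{t-1}+C\gamma_t^2\). For \(t<t_1\) the factor is bounded by a constant, so \(u_{t_1}\leq C(u_1+1)\leq C\), using \(u_1\leq\mathbb E\|\Delta_1\|^2\mathbb I\{\tau>1\}\leq\rho^2\). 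Iterating from \(t_1\) gives
\[
u_t\leq \prod_{s=t_1+1}^{t}(1-c_0\gamma_s)\,u_{t_1}+C\sum_{j=t_1+1}^{t}\gamma_j^2\prod_{s=j+1}^{t}(1-c_0\gamma_s).
\]

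For the first term we use \(1-x\leq e^{-x}\) and \(\sum_{s\leq t}\gamma_s\geq c\,t^{1-a}-C\), which bounds it by \(C\exp(-C_6t^{1-a})\).

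The second term is the main obstacle. Our first attempt is to split the sum at \(j=\lfloor t/2\rfloor\). For \(j\leq t/2\), the product is at most \(\exp(-c_0\sum_{s=t/2}^{t}\gamma_s)\leq\exp(-c\,t^{1-a})\), and \(\sum_j\gamma_j^2<\infty\), so this piece is absorbed into the exponential term. For \(j>t/2\), we have \(\gamma_j\leq 2^a\gamma_0 t^{-a}\) up to the shift by one, so this piece is at most \(Ct^{-a}\sum_{j}\gamma_j\prod_{s>j}(1-c_0\gamma_s)\). The remaining sum is bounded by \(1/c_0\) through the telescoping identity \(\sum_j c_0\gamma_j\prod_{s>j}(1-c_0\gamma_s)=1-\prod(1-c_0\gamma_s)\leq1\). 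Adjusting constants then yields \(u_t\leq C_5\exp(-C_6t^{1-a})+C_5t^{-a}\) for all \(t\geq1\), and small \(t\) are handled by enlarging \(C_5\). Note that only \eqref{eq81} is used, so part (iii) of Lemma~\ref{lem1} follows without circularity.

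Next, \(\mathbb E(\|\Delta_t\|^2\mathbb I\{\tau>t\})=u_t\to0\), so Markov's inequality gives \(\|\Delta_t\|\mathbb I\{\tau>t\}\xrightarrow{p}0\).

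Finally, for \(b>0\) we write
\[
\mathbb P(\|\theta_t-\theta_0\|>b)\leq\mathbb P(\|\Delta_t\|\mathbb I\{\tau>t\}>b)+\mathbb P(\tau\leq t).
\]
The first term tends to zero. The second is at most \(\mathbb P(\tau<\infty)\), whose limsup is at most \(\epsilon\) by Lemma~\ref{lem1}(ii) once \(\gamma_0\leq\bar\gamma(\epsilon)\), with the \(o(1)\) coming from pilot consistency along \(n_0(T)\to\infty\). This proves the claim.
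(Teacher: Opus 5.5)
Your argument is correct and follows essentially the same route as the paper: reduce \eqref{eq81} to the contraction $u_t\le(1-c_0\gamma_t)u_{t-1}+C\gamma_t^2$, unroll it, split the noise sum at $\lfloor t/2\rfloor$, bound the recent block by telescoping, and finish with Markov's inequality and Lemma~\ref{lem1}(ii). The differences are cosmetic: you use a burn-in index $t_1$ where the paper shrinks $\gamma_0$ (so you need no extra small-step condition), and you telescope the exact products where the paper telescopes their exponential bounds. One small fix: when choosing $t_1$, also require $c_0\gamma_s\le 1$ for $s>t_1$, because both the unrolling and the bound $1-\prod_s(1-c_0\gamma_s)\le 1$ need the factors $1-c_0\gamma_s$ to be nonnegative; this holds automatically for large $t_1$ since $\gamma_s\to0$.
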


\begin{proof}
{By \eqref{eq81},
\begin{equation}\label{eq:rate-recursion}
u_t
\leq
(1-2c_0\gamma_t+C\gamma_t^2)u_{t-1}
+C\gamma_t^2,
\qquad
u_1\leq\rho^2.
\end{equation}
After decreasing the upper bound on $\gamma_0$ in Lemma~\ref{lem1}, if necessary, we may assume that $C\gamma_t\leq c_0$ and $c_0\gamma_t\leq1$ for every $t\geq2$. Hence
\[
0\leq1-c_0\gamma_t\leq1,
\qquad
1-2c_0\gamma_t+C\gamma_t^2\leq1-c_0\gamma_t,
\]
and \eqref{eq:rate-recursion} implies
\begin{equation}\label{eq:rate-positive-recursion}
    u_t\leq(1-c_0\gamma_t)u_{t-1}+C\gamma_t^2.
\end{equation}
Iterating this recursion and using $1-x\leq e^{-x}$ gives
\begin{equation}\label{eq:rate-unrolled}
u_t
\leq
u_1\exp\!\left(-c_0\sum_{i=2}^t\gamma_i\right)
+C\sum_{k=2}^t\gamma_k^2
\exp\!\left(-c_0\sum_{i=k+1}^t\gamma_i\right).
\end{equation}

Integral bounds for the learning-rate sequence give constants $c,C>0$ such that
\begin{equation}\label{eq:rate-sum-bound}
c\{t^{1-a}-s^{1-a}\}
\leq
\sum_{i=s+1}^t\gamma_i
\leq
C\{t^{1-a}-s^{1-a}\},
\qquad 1\leq s<t.
\end{equation}
The first term in \eqref{eq:rate-unrolled} is therefore bounded by $C\exp(-ct^{1-a})$.

{For the second term, let
$k_\star=\lfloor t/2\rfloor$ and first consider \(k\leq k_\star\).}
Because $a>1/2$, $\sum_{k\geq2}\gamma_k^2<\infty$, and
\eqref{eq:rate-sum-bound} gives
\begin{align*}
&\sum_{k=2}^{k_\star}\gamma_k^2
\exp\!\left(-c_0\sum_{i=k+1}^t\gamma_i\right)\\
&\qquad\leq
\exp\!\left(-c_0\sum_{i=k_\star+1}^t\gamma_i\right)
\sum_{k=2}^{\infty}\gamma_k^2
\leq
C\exp(-ct^{1-a}).
\end{align*}
For $k\geq k_\star+1$, monotonicity of $\gamma_k$ gives
\(\gamma_k\leq C\gamma_t\), and hence
\begin{align*}
&\sum_{k=k_\star+1}^{t}\gamma_k^2
\exp\!\left(-c_0\sum_{i=k+1}^t\gamma_i\right)\\
&\qquad\leq
C\gamma_t
\sum_{k=k_\star+1}^{t}\gamma_k
\exp\!\left(-c_0\sum_{i=k+1}^t\gamma_i\right).
\end{align*}
Set $E_k:=\exp\{-c_0\sum_{i=k+1}^t\gamma_i\}$. Since $c_0\gamma_k\leq1$,
\[
E_k-E_{k-1}
=E_k(1-e^{-c_0\gamma_k})
\geq
\frac{c_0}{2}\gamma_kE_k.
\]
The preceding sum therefore telescopes and is uniformly bounded:
\[
\sum_{k=k_\star+1}^{t}\gamma_kE_k
\leq
\frac{2}{c_0}\sum_{k=k_\star+1}^{t}(E_k-E_{k-1})
\leq
\frac{2}{c_0}.
\]
Thus the recent part is bounded by $C\gamma_t\leq Ct^{-a}$. Combining the transient, early, and recent bounds in \eqref{eq:rate-unrolled}, and enlarging the constants to cover the finitely many small values of $t$, yields
\[
    u_t\leq C_5\exp(-C_6t^{1-a})+C_5t^{-a}.
\]

For every $b>0$, Markov's inequality gives
\[
\mathbb P\!\left(
\|\Delta_t\|\mathbb I\{\tau>t\}>b
\right)
\leq
\frac{u_t}{b^2}
\longrightarrow0,
\]
which proves the stopped consistency claim. Moreover,
\[
\mathbb P(\|\Delta_t\|>b)
\leq
\mathbb P(\tau\leq t)+\frac{u_t}{b^2}.
\]
Lemma~\ref{lem1}(ii) consequently gives
\[
\limsup_{t\to\infty}\mathbb P(\|\Delta_t\|>b)\leq\epsilon.
\]
This is the asserted $\epsilon$-form of consistency; it does not imply unconditional $o_{\mathbb P}(1)$ for a fixed step-size scale.}
\end{proof}

\bibliographystyle{chicago}
\bibliography{references}

\end{document}